\newcommand{\noun}[1]{\textsc{#1}}
\def\RSsubtxt{section~}\newref{sub}{name = \RSsubtxt}}
\def\RSthmtxt{theorem~}\newref{thm}{name = \RSthmtxt}}
\def\RSlemtxt{lemma~}\newref{lem}{name = \RSlemtxt}}
\theoremstyle{plain}
\newtheorem{thm}{\protect\theoremname}[section]
\theoremstyle{remark}
\newtheorem{rem}[thm]{\protect\remarkname}
\theoremstyle{definition}
\newtheorem{defn}[thm]{\protect\definitionname}
\theoremstyle{plain}
\newtheorem{lem}[thm]{\protect\lemmaname}
\newenvironment{proof}[1][\protect\proofname]{\par
\normalfont\topsep6\p@\@plus6\p@\relax
\trivlist
\itemindent\parindent
\item[\hskip\labelsep
\scshape
#1]\ignorespaces
}{%
\endtrivlist\@endpefalse
}
\providecommand{\proofname}{Proof}
\providecommand{\definitionname}{Definition}
\providecommand{\lemmaname}{Lemma}
\providecommand{\remarkname}{Remark}
\providecommand{\theoremname}{Theorem}
\begin{document}

\title{Effective Dynamics\\
of a Tracer Particle Interacting\\
with an Ideal Bose Gas}

\author{D.-A. Deckert, J. Fröhlich, P. Pickl, A. Pizzo}

\maketitle
\begin{abstract}
We study a system consisting of a heavy quantum particle, called
\emph{tracer particle,} coupled to an ideal gas of light Bose
particles, the ratio of masses of the tracer particle and
 a gas particle being proportional to the gas density.
All particles have non-relativistic kinematics.
The tracer particle is driven by an external potential and couples
to the gas particles through a pair potential. We compare the
quantum dynamics of this system to an effective dynamics given by a Newtonian equation of motion for the tracer particle coupled to a classical wave equation for the Bose gas. We quantify the closeness of these two dynamics as the mean-field limit is approached (gas density $\to\infty$). Our estimates allow us to interchange the thermodynamic with the mean-field limit.
\end{abstract}

\section{Introduction}

As a model for the dynamics of a heavy particle that interacts with
an ideal Bose gas of light particles we consider the Schrödinger equation

\selectlanguage{english}%
\begin{equation}
i\partial_{t}\Psi_{t}(x,y_{1,\ldots,}y_{N})=H\Psi_{t}(x,y_{1,\ldots,}y_{N}),\label{eq:microscopic}
\end{equation}
with
\begin{equation}
H:=-\frac{\Delta_{x}}{2\rho}+\rho V(x)-\sum_{k=1}^{N}\Delta_{y_{k}}+\sum_{k=1}^{N}W(x-y_{k}),\label{eq:micro hamiltonian}
\end{equation}
\foreignlanguage{american}{for vectors $\Psi_{t}$ in the Hilbert
space of complex-valued, square-integrable $(N+1)$-particle wave
functions on configuration space $\mathbb{R}^{3(N+1)}$, i.e.,
\[
{\cal H}:=L^{2}(\mathbb{R}^{3(N+1)},\mathbb{C}).
\]
Here $x\in\mathbb{R}^{3}$ represents the position of the tracer particle, and $y_{1},\ldots,y_{N}\in\mathbb{R}^{3}$
are the positions of the $N$ gas particles. The Laplace operators w.r.t. $x$
and $y_{k}$ are denoted by $\Delta_{x}$ and $\Delta_{y_{k}},$ respectively.
The potential $W$ describes the interaction of the tracer particle
with the gas, and the potential $V$ describes an external
force driving the tracer particle. The support of
each gas particle tensor component of $\Psi_{t}|_{t=0}$ is given by a compact region $\Lambda\subset\mathbb{R}^{3}$
the volume of which is denoted by $|\Lambda|$. The mean particle density is given by
\begin{equation}
\rho:=\frac{N}{|\Lambda|}\label{eq:rho_def}
\end{equation}
and is kept fixed. The ratio between the mass of the tracer particle
and the one of the gas particle is chosen to be $2\rho$. We use units such that $\hbar=1$ -- note that $V$ has different dimension than $W$.}\\

\selectlanguage{american}%
One might expect that if the tracer particle is well-localized initially, and because of
its heavy mass, the variance of its position remains small. Therefore, its mean position
should effectively obey Newton's second law, for a force generated
by the external potential $V$ and by the mean-field version of the
potential $W$ originating from an effective dynamics of the gas.
It should be observed that, for large volume $\left|\Lambda\right|$,
the influence of the tracer particle on the gas is not of leading
order. Yet the back reaction of the perturbation of the
gas on the tracer particle is relevant. In our study, we are interested
in the (one-particle) wave function of the gas, as compared to a reference state $\phi_{t}^{(\mathrm{ref})}$ that obeys
\begin{equation}
i\partial_{t}\phi_{t}^{(\mathrm{ref})}(y)=-\Delta_{y}\phi_{t}^{(\mathrm{ref})}(y)\label{eq:free_reference}
\end{equation}
and describes the effective one-particle wave function of a system
of gas particles in the absence of the tracer particle. The effective
one-particle wave function of the system of  gas particles interacting
with the tracer particle is denoted by
\[
\phi_{t}^{(\mathrm{ref})}+\epsilon_{t}.
\]
The function $\epsilon_t$ describes an "excited state" of the gas.

For large $|\Lambda|$ and $\rho$, the Schr\"odinger dynamics given by 
(\ref{eq:microscopic}), (\ref{eq:micro hamiltonian}) is expected to be well approximated by a semi-classical dynamics described by the equations

\begin{eqnarray}
i\partial_{t}\epsilon_{t}(y) & = & \left(-\Delta_{y}+W\left(X_{t}-y\right)\right)\epsilon_{t}(y)+W(X_{t}-y)\phi_{t}^{(\mathrm{ref})}(y),\label{eq:macro_gas}\\
\frac{d^{2}X_{t}}{dt^{2}} & = & -\nabla V(X_{t})-\nabla W*|\epsilon_{t}|^{2}(X_{t})-2\Re\nabla W*\left(\overline{\phi_{t}^{(\mathrm{ref})}}\epsilon_{t}\right)(X_{t}).\label{eq:macro_tracer}
\end{eqnarray}
Here $t \mapsto X_t \in \mathbb{R}^3$ denotes a classical trajectory of the tracer particle. Equation (\ref{eq:macro_gas}) is a Hartree-type equation, and $\epsilon_t$ is called Hartree (one-particle) wave function. In order to keep the excited state of the gas described by $\epsilon_t$ visible in the scaling considered in this paper, we choose the norm of $\phi_{t}^{(\mathrm{ref})}$ such that the inhomogeneity in (\ref{eq:macro_gas}) is of order $O(1)$, i.e.,
\begin{equation}
\left\Vert \phi_{t=0}^{(\mathrm{ref})}\right\Vert _{\infty}={\cal O}(1).\label{eq:ref_nom}
\end{equation}

\begin{rem}
In our case $\phi_{t}^{(\mathrm{ref})}$ varies only little in a neighborhood
of $X_{t}$. Hence, it is possible to replace the inhomogeneous term
in (\ref{eq:macro_gas}) by
\[
W(X_{t}-y)\phi_{t}^{(\mathrm{ref})}(X_{t}).
\]

\end{rem}
From now on, we refer to the time evolution generated
by (\ref{eq:microscopic}) as the \emph{microscopic dynamics} and to the
one generated by (\ref{eq:macro_gas})-(\ref{eq:macro_tracer}) as the
\emph{macroscopic dynamics}. The goal of this work is to quantify
the closeness of these two dynamics and to estimate the rate of convergence, as $\left|\Lambda\right|,\rho\to\infty$.\\

Derivations of such mean-field equations from the microscopic $N$-body
Schrödinger evolution are usually carried out by making use of reduced density matrices and are based on hierarchies \cite{spohn,erdos}.
In recent years, alternative methods have been developed
to derive the Hartree equation from the microscopic dynamics. One
approach, developed in \cite{froehlich}, relies on the Heisenberg picture and involves dispersive estimates and the counting of Feynman graphs. Another one was
introduced in \cite{rodnianski}. It exploits the dynamics of coherent states and is inspired by a semiclassical argument given by Hepp \cite{hepp}, who initiated all these studies. In this paper we follow a different
approach introduced in \cite{Pickl_2011_A-Simple-Derivation-of-Mean-Field-Limits-for-Quantum-Systems},
which is based on counting the number of ``bad'' particles, i.e.,
particles that are not in the state given by the Hartree wave function.

The Hartree wave function is of interest because it can be used to
investigate physically interesting phenomena, such as quantum friction
by emission of \v{C}erenkov radiation \cite{Froehlich_2011_Friction-in-a-Model-of-Hamiltonian-Dynamics}, more easily than by using the microscopic dynamics. Rigorous control of the microscopic time-evolution in terms of a macroscopic one is however a difficult problem. Our paper provides a first result in this direction for a system consisting of a tracer particle interacting with an ideal Bose gas. For the analysis of \v{C}erenkov radiation, i.e., the deceleration of a particle with a speed higher than the speed of sound in the gas, interacting Bose gases, where the speed of sound is non-zero, are most interesting. The techniques to treat the ideal gas presented in this paper appear to be
very robust and to allow for many generalizations. A mean
field pair interaction of the gas particles, for example, can readily be introduced
into our mothods, using estimates provided in \cite{Pickl_2011_A-Simple-Derivation-of-Mean-Field-Limits-for-Quantum-Systems}. Further generalizations to models with a thermodynamic scaling of interacting gases are presently under investigation.\\

\noindent\textbf{Acknowledgments:} D.-A.D. gratefully acknowledges
financial support from the post-doc program of the DAAD. J.F.'s visit at IAS is supported by 'The Fund For Math' and 'The Monell Foundation'. A.P. is supported
by the NSF grant \#DMS-0905988. Furthermore, D.-A.D. and P.P. would
like to thank the Mathematisches Institut der LMU München and the Department of Mathematics of UC Davis for their hospitality.

\section{\label{sec:notation}Notation}
\begin{enumerate}
\item The expectation value of an operator $O$ w.r.t. the microscopic
wave function $\Psi_{t}$ is denoted by 
\[
\left\langle O\right\rangle _{t}:=\left\langle \Psi_{t},O\Psi_{t}\right\rangle .
\]

\item $\left|\cdot\right|$ is the canonical norm on $\mathbb{C}^{d}$, for
any dimension $d$; $\left\Vert \cdot\right\Vert _{p}$ is the norm
on the Lebesgue space $L^{p}$, $0\leq p\leq\infty$. For operators,
$O$, on $L^{2}$ or ${\cal H}$, we denote by $\left\Vert O\right\Vert$
their operator norm. We also introduce the norm $\Vert M \Vert_{p}$ for matrix-valued functions, M, with matrix elements $M_{ij}$. It is defined by
\[
\left\Vert M\right\Vert _{p}:=\sum_{i,j}\left\Vert M_{ij}\right\Vert _{p}.
\]
 
\item The momentum operator of the tracer particle is denoted by \foreignlanguage{english}{
\[
p:=-i\nabla_{x}.
\]
}Furthermore, for $n\in\mathbb{N}$, $D\left(\Delta^{n}\right)$ denotes
the natural domain of the self-adjoint operator $\Delta^{n}$ in $L^{2}(\mathbb{R}^{3},\mathbb{C})$,
and $D\left(\Delta^{\infty}\right):=\cap_{n\in\mathbb{N}}D\left(\Delta^{n}\right)$.
Similarly, we denote by $D(H_{0})$ and $D(H_{0}^{n})$, $n\in\mathbb{N}$,
the domains in $\mathcal{H}$ of the self-adjoint operators
\[
H_{0}:=-\frac{\Delta_{x}}{2\rho}-\sum_{k=1}^{N}\Delta_{y_{k}}\qquad \text{and}\qquad H_{0}^{n},
\]
respectively, and $D(H_{0}^{\infty}):=\cap_{n\in\mathbb{N}}D(H_{0}^{n})$.
\item The Fourier transform of a function $\eta \in L^{2}$ is denoted by $\widehat{\eta}$.
\item \label{enu:projectors}Given a vector $\eta\in L^{2}$ with $\left\Vert \eta\right\Vert _{2}=1$
we denote the orthogonal projection onto $\eta$ by
\[
\left|\eta\right\rangle \left\langle \eta\right|.
\]
Furthermore, we use the notation\foreignlanguage{english}{
\begin{equation}
q_{k}^{\eta}:=1-p_{k}^{\eta},\qquad\left(p_{k}^{\eta}\Psi\right)(x,y_{1},\ldots,y_{N}):=\eta(y_{k})\int d^{3}y_{k}\,\eta^{*}(y_{k})\Psi(x,y_{1},\ldots,y_{N}),\qquad1\leq k\leq N,\label{eq:projectors}
\end{equation}
}and\foreignlanguage{english}{
\begin{eqnarray}\label{eq:qtref}
q_{t}^{(\mathrm{ref})}:=1-p_{t}^{(\mathrm{ref})},\qquad p_{t}^{(\mathrm{ref)}}:=\left|\left|\Lambda\right|^{-1/2}\phi_{t}^{(\mathrm{ref})}\right\rangle \left\langle \left|\Lambda\right|^{-1/2}\phi_{t}^{(\mathrm{ref})}\right|.
\end{eqnarray}
}

We note that, in $q_{k}^{\eta}$, the subscript $k$ always
stands for the $k$-th tensor component, while, in $q_{t}^{(\mathrm{ref)}}$,
the subscript $t$ always refers to time $t$.

\item The convolution of two functions $f,g$ on $\mathbb{R}^{3}$ is denoted
by $(f*g)(\cdot):=\int dy\, f(\cdot-y)g(y)$.
\item The statement "$F \in \mathtt{Bounds}$" refers to a continuous, monotone increasing function $F: \mathbb{R}^{+} \rightarrow \mathbb{R}^{+}$.
\item If not specified otherwise the symbol $C$ denotes a universal constant the value of which may change from one line to the next.
\end{enumerate}

\section{Main Result}

Throughout this paper we assume that the density, $\rho$, of the gas fulfills
\[
\rho>1.
\]

In order to compare the solutions of the microscopic and macroscopic dynamics, we
choose initial conditions that are compatible with each other:
\begin{defn}
\label{def:initial_values}~\\

\begin{enumerate}[label=(\roman*)]
\item As initial conditions for the microscopic dynamics (\ref{eq:microscopic})
we consider $\Psi_{t}|_{t=0}=\Psi^{(0)}$, where 
\begin{equation}
\Psi^{(0)}(x,y_{1},\ldots,y_{N}):=\chi^{(0)}(x)\prod_{k=1}^{N}\phi^{(0)}(y_{k}),\qquad\left\Vert \Psi^{(0)}\right\Vert _{2}=1,\label{eq:initial_psi}
\end{equation}
is given in terms of some unit vectors $\chi^{(0)},\phi^{(0)}\in L^{2}(\mathbb{R}^{3},\mathbb{C})$
with the properties 
\[
\chi^{(0)}\in D\left(\Delta^{\infty}\right),\qquad\phi^{(0)}\in D\left(\Delta^{\infty}\right)
\]
such that:

\begin{enumerate}
\item 

The initial variance of the position $x$ and the velocity $\frac{p}{\rho}$ of the tracer particle fulfills
\begin{equation}
\left\langle \chi^{(0)},\left(x-\left\langle \chi^{(0)},x\chi^{(0)}\right\rangle \right)^{2}+\left(\frac{p-\left\langle \chi^{(0)},p\chi^{(0)}\right\rangle }{\rho}\right)^{2},\chi^{(0)}\right\rangle \leq\frac{C}{\rho^{\delta}}\label{eq:initial_spread}
\end{equation}
for a fixed constant $\delta$, with $0<\delta\leq 1$.
\item The support of $\phi^{(0)}$ is contained in a compact region $\Lambda\subset\mathbb{R}^{3}$,
and, as $|\Lambda|\to\infty$, the initial wave function becomes flat,
in the sense that
\begin{equation}
\left\Vert \widehat{\phi^{(0)}}\right\Vert _{1}=C\left|\Lambda\right|{}^{-1/2},\qquad\left\Vert \widehat{\nabla\phi^{(0)}}\right\Vert _{1}\leq C\left|\Lambda\right|{}^{-5/6}.\label{eq:initial_gas_wf}
\end{equation}

\end{enumerate}
\item As initial conditions for the macroscopic dynamics (\ref{eq:macro_gas})-(\ref{eq:macro_tracer})
we choose
\begin{equation}
\epsilon_{t}|_{t=0}=0,\qquad X_{t}|_{t=0}=\left\langle \chi^{(0)},x\,\chi^{(0)}\right\rangle ,\qquad\dot{X}_{t}|_{t=0}=\left\langle \chi^{(0)},\frac{p}{\rho}\,\chi^{(0)}\right\rangle .\label{eq:semi-classical initial values}
\end{equation}
Furthermore, we define
\[
\phi^{(\mathrm{ref})}:\mathbb{R}\to L^{2}(\mathbb{R}^{3},\mathbb C),\qquad t\mapsto\phi_{t}^{(\mathrm{ref})}
\]
 to be the solution to (\ref{eq:free_reference}) with
\begin{equation}
\phi_{t}^{(\mathrm{ref)}}|_{t=0}=\left|\Lambda\right|^{1/2}\phi^{(0)}.\label{eq:initial_reference}
\end{equation}

\end{enumerate}
\end{defn}
\begin{rem}
(i) Note that an example of a function $\chi^{(0)}$ satisfying (\ref{eq:initial_spread})
is a Gaussian wave packet with a variance in the position comprised between
$\rho^{-\gamma}$ and $\rho^{\gamma-1}$, for some $0<\gamma<1$. By
Heisenberg's uncertainty principle, the variance in the momentum is then
between $\rho^{1-\gamma}$ and $\rho^{\gamma}$, and, since the mass of the tracer particle 
is of order $\rho$, the variance of its velocity is between $\rho^{-\gamma}$
and $\rho^{\gamma-1}$. Hence, $\delta$ can be chosen to be $\min\left\{ \gamma,1-\gamma\right\} $.
(ii) A product wave function like (\ref{eq:initial_psi}) is of course
a very special initial condition. However, this condition can be relaxed, as pointed out in Remark \ref{rem:ini-con} below.
\end{rem}
In order to keep our analysis simple, we assume the potentials $V,W$ to be smooth functions of compact support, i.e.,
\begin{equation}
V,W\in{\cal C}_{c}^{\infty}(\mathbb{R}^{3},\mathbb{R}).\label{eq:potentials}
\end{equation}
It is a standard result that the equations of motion (\ref{eq:microscopic}),
(\ref{eq:free_reference}), and (\ref{eq:macro_gas})-(\ref{eq:macro_tracer}),
with initial conditions as chosen above, have smooth solutions.
\begin{defn}
\label{def:solutions}We denote by

\begin{eqnarray}
\Psi:\mathbb{R}\to L^{2}(\mathbb{R}^{3},\mathbb{C})\otimes L^{2}(\mathbb{R}^{3},\mathbb{C})^{\odot N}\subset{\cal H},\qquad t & \mapsto & \Psi_{t}\label{eq:micro psi},
\end{eqnarray}
the unique solution to the Schr\"odinger  equation (\ref{eq:microscopic}) 
with initial condition given by (\ref{eq:initial_psi}); by
\[
\phi^{(\mathrm{ref})}:\mathbb{R}\to L^{2}(\mathbb{R}^{3},\mathbb{C}),\qquad t\mapsto\phi_{t}^{(\mathrm{ref})},
\]
the unique solution to Eq. (\ref{eq:free_reference}) 
with initial data as in (\ref{eq:initial_reference}); and by

\begin{eqnarray}
\epsilon\times X\times \dot X:\mathbb{R}\to L^{2}(\mathbb{R}^{3},\mathbb{C})\times\mathbb{R}^{3}\times\mathbb{R}^{3},\qquad t & \mapsto & (\epsilon_{t},X_{t},\dot{X}_{t})\label{eq:macro solution}
\end{eqnarray}
the unique solution to equations (\ref{eq:macro_gas})-(\ref{eq:macro_tracer})
with initial data as in (\ref{eq:semi-classical initial values}). \end{defn}
\begin{rem}
Note that assumptions (\ref{eq:potentials}) and $\chi^{(0)},\phi^{(0)}\in D\left(\Delta^{\infty}\right)$,
in \defref{initial_values}, are much stronger than necessary. As can
be seen from the norms used in our proofs, the results presented below
hold for a considarably more general class of potentials and initial wave
functions. Finding optimal conditions is, however, not our aim
in this paper.
\end{rem}
The main result in our paper is the following theorem:
\begin{thm}
\label{thm:main}Let \foreignlanguage{english}{\textup{
\[
\rho_{t}^{(\mathrm{micro})}:=q_{t}^{(\mathrm{ref})}\, tr{}_{x,y_{2},\ldots,y_{N}}\bigg(\left|\Lambda\right|\left|\Psi_{t}\right\rangle \left\langle \Psi_{t}\right|\bigg)q_{t}^{(\mathrm{ref})},\qquad\rho_{t}^{(\mathrm{macro})}:=\left|\epsilon_{t}\right\rangle \left\langle \epsilon_{t}\right|
\]
}}denote the density matrices of the gas excitations w.r.t. $\phi_{t}^{(\mathrm{ref})}$ of the microscopic
and the macroscopic descriptions, respectively; for the definition of $q_{t}^{(\mathrm{ref})}$, see (\ref{eq:qtref}).
There exist $C_{1},C_{2}\in\mathtt{Bounds}$ such that, for all $t\in\mathbb{R}$
and sufficiently large $\rho$ and $\left|\Lambda\right|$, the following
estimates hold true:
\begin{enumerate}[label=(\roman*)]
\item 
  \begin{align}
    \left\Vert \rho_{t}^{(\mathrm{micro})}-\rho_{t}^{(\mathrm{macro})}\right\Vert &\leq C_{1}(t)\left(\alpha_{0}^{1/2}+\rho^{-1/2}+\left|\Lambda\right|^{-1/3}\right) \label{eq:density-ieq}\\
    &\leq C_{1}(t)\left(\rho^{-\frac{1}{2}\min\{1,\delta\}}+\left|\Lambda\right|^{-1/3}\right).\label{eq:density-ieq2}
    \end{align}
\selectlanguage{english}%
\item 
\begin{align}
\|X_{t}-\left\langle x\right\rangle _{t}\|+\|\dot{X}_{t}-\left\langle \frac{p}{\rho}\right\rangle _{t}\|&\leq C_{2}(t)\left(\alpha_{0}^{1/2}+\rho^{-1/2}+\left|\Lambda\right|^{-1/3}\right) \label{eq:traj-ieq}\\
&\leq C_{2}(t)\left(\rho^{-\frac{1}{2}\min\{1,\delta\}}+\left|\Lambda\right|^{-1/3}\right).\label{eq:traj-ieq2}
  \end{align}
\end{enumerate}
Here $\alpha_{0}$, defined in (\ref{eq:alpha_def}) below, reflects the dependence on the initial condition (\ref{eq:initial_psi}).
\end{thm}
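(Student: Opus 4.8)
The plan is to control both left-hand sides by a single non-negative functional and close a Grönwall estimate for it. Set
\[
\alpha_{t}:=\Big\langle\big(x-\langle x\rangle_{t}\big)^{2}+\big(\tfrac{p}{\rho}-\langle\tfrac{p}{\rho}\rangle_{t}\big)^{2}\Big\rangle_{t}
+\big|X_{t}-\langle x\rangle_{t}\big|^{2}+\big|\dot{X}_{t}-\langle\tfrac{p}{\rho}\rangle_{t}\big|^{2}+\mathcal{E}_{t},
\]
where $\langle x\rangle_{t}:=\langle\Psi_{t},x\Psi_{t}\rangle$ etc., and $\mathcal{E}_{t}$ is an excitation term built from the rescaled number of bad gas particles $|\Lambda|\langle q_{1}^{(\mathrm{ref})}\rangle_{t}$ together with a term comparing the microscopic gas excitation with $\epsilon_{t}$, arranged so that $\|\rho_{t}^{(\mathrm{micro})}-\rho_{t}^{(\mathrm{macro})}\|\le C\,\alpha_{t}^{1/2}$. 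By Definition~\ref{def:initial_values} all gas particles start in the normalized reference state and $\epsilon_{0}=0$, $X_{0}=\langle x\rangle_{0}$, $\dot X_{0}=\langle p/\rho\rangle_{0}$, so $\mathcal{E}_{0}=0$, the trajectory discrepancies vanish, and $\alpha_{0}$ reduces to the left-hand side of (\ref{eq:initial_spread}) -- this is the definition of $\alpha_{0}$ in (\ref{eq:alpha_def}), and in particular $\alpha_{0}\le C\rho^{-\delta}=C\rho^{-\min\{1,\delta\}}$. It then suffices to prove the differential inequality $\tfrac{d}{dt}\alpha_{t}\le C(t)\big(\alpha_{t}+\rho^{-1}+|\Lambda|^{-2/3}\big)$ with $C\in\mathtt{Bounds}$: Grönwall gives $\alpha_{t}\le e^{\int_{0}^{t}C}\big(\alpha_{0}+\rho^{-1}+|\Lambda|^{-2/3}\big)$, hence $\alpha_{t}^{1/2}\le C_{1}(t)\big(\alpha_{0}^{1/2}+\rho^{-1/2}+|\Lambda|^{-1/3}\big)$, which yields (\ref{eq:density-ieq}) and (\ref{eq:traj-ieq}); and since $\delta\le 1$ one has $\alpha_{0}^{1/2}\le C\rho^{-\frac12\min\{1,\delta\}}$ and $\rho^{-1/2}\le\rho^{-\frac12\min\{1,\delta\}}$, giving (\ref{eq:density-ieq2}) and (\ref{eq:traj-ieq2}).

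For the tracer contribution to $\tfrac{d}{dt}\alpha_{t}$ I would use the exact Ehrenfest identities $\tfrac{d}{dt}\langle x\rangle_{t}=\langle\tfrac{p}{\rho}\rangle_{t}$ and $\tfrac{d}{dt}\langle\tfrac{p}{\rho}\rangle_{t}=-\langle\nabla V(x)\rangle_{t}-|\Lambda|\langle\nabla W(x-y_{1})\rangle_{t}$ and compare with (\ref{eq:macro_tracer}). Taylor-expanding $\nabla V$ around $X_{t}$ (licit since $V\in\mathcal{C}_{c}^{\infty}$ has $O(1)$ derivatives) gives an error $\le C\big(|X_{t}-\langle x\rangle_{t}|+\langle(x-\langle x\rangle_{t})^{2}\rangle_{t}\big)\le C\alpha_{t}^{1/2}$; since the resulting force discrepancy only ever multiplies the velocity discrepancy $\dot X_{t}-\langle p/\rho\rangle_{t}=O(\alpha_{t}^{1/2})$, even a residual $O(|\Lambda|^{-1/3})$ force term contributes only $O(|\Lambda|^{-2/3})$ after Young's inequality. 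Differentiating the two variances and combining the $\langle(\tfrac{p}{\rho})^{2}\rangle_{t}$ and $\langle\tfrac{p}{\rho}\rangle_{t}^{2}$ parts, the mean contributions cancel and what remains is $\le C\alpha_{t}+2(\mathrm{Var}\,\nabla V(x))^{1/2}(\mathrm{Var}\,\tfrac{p}{\rho})^{1/2}+2(\mathrm{Var}\,\mathcal{F})^{1/2}(\mathrm{Var}\,\tfrac{p}{\rho})^{1/2}$, where $\mathcal{F}:=\tfrac1\rho\sum_{k}\nabla W(x-y_{k})$. One has $\mathrm{Var}\,\nabla V(x)\le\|\nabla^{2}V\|_{\infty}^{2}\langle(x-\langle x\rangle_{t})^{2}\rangle_{t}\le C\alpha_{t}$, and the key point is $\mathrm{Var}\,\mathcal{F}\le C\rho^{-1}+C\alpha_{t}$: $\mathcal{F}$ is $|\Lambda|$ times an $N$-particle empirical average of the single-particle observables $\nabla W(x-y_{k})$, each of which on the nearly flat reference state has second moment $O(|\Lambda|^{-1})$ (e.g.\ $\langle p_{1}^{(\mathrm{ref})}|\nabla W(x-y_{1})|^{2}p_{1}^{(\mathrm{ref})}\rangle\le\||\Lambda|^{-1/2}\phi_{t}^{(\mathrm{ref})}\|_{\infty}^{2}\|\nabla W\|_{2}^{2}$), so the average has variance $O((N|\Lambda|)^{-1})$ and $\mathrm{Var}\,\mathcal{F}=O(|\Lambda|^{2}(N|\Lambda|)^{-1})=O(\rho^{-1})$, the genuine two-particle correlation correction being absorbed into $\mathcal{E}_{t}$. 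Thus these terms contribute $C\alpha_{t}+C\rho^{-1}$.

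The main obstacle is the back-reaction / excitation part. Inserting $\mathbf{1}=p_{1}^{(\mathrm{ref})}+q_{1}^{(\mathrm{ref})}$ on each side of $\nabla W(x-y_{1})$ in $|\Lambda|\langle\nabla W(x-y_{1})\rangle_{t}$ produces (a) a mean-field term $\langle(\nabla W*|\phi_{t}^{(\mathrm{ref})}|^{2})(x)\,p_{1}^{(\mathrm{ref})}\rangle_{t}$, (b) a diagonal term $|\Lambda|\langle q_{1}^{(\mathrm{ref})}\nabla W(x-y_{1})q_{1}^{(\mathrm{ref})}\rangle_{t}$, and (c) an off-diagonal term $2\Re\,|\Lambda|\langle p_{1}^{(\mathrm{ref})}\nabla W(x-y_{1})q_{1}^{(\mathrm{ref})}\rangle_{t}$. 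Term (a) is $O(|\Lambda|^{-1/3})$: integrating by parts, $\nabla W*|\phi_{t}^{(\mathrm{ref})}|^{2}=2\,W*\Re(\overline{\phi_{t}^{(\mathrm{ref})}}\nabla\phi_{t}^{(\mathrm{ref})})$, and $\|\widehat{\nabla\phi_{t}^{(\mathrm{ref})}}\|_{1}=\||\Lambda|^{1/2}\widehat{\nabla\phi^{(0)}}\|_{1}\le C|\Lambda|^{-1/3}$ is invariant under (\ref{eq:free_reference}), by (\ref{eq:initial_gas_wf}). Terms (b) and (c) must reproduce, up to $O(\alpha_{t}^{1/2})$, exactly the inhomogeneous pieces $\nabla W*|\epsilon_{t}|^{2}(X_{t})$ and $2\Re\nabla W*(\overline{\phi_{t}^{(\mathrm{ref})}}\epsilon_{t})(X_{t})$ of (\ref{eq:macro_tracer}); this forces a comparison of $|\Lambda|\,q_{1}^{(\mathrm{ref})}\gamma_{t}^{(1)}q_{1}^{(\mathrm{ref})}$ (with $\gamma_{t}^{(1)}$ the one-particle reduced density matrix of $\Psi_{t}$) and its off-diagonal block with $|\epsilon_{t}\rangle\langle\epsilon_{t}|$ and $|\epsilon_{t}\rangle\langle|\Lambda|^{-1/2}\phi_{t}^{(\mathrm{ref})}|$ -- i.e.\ one is essentially proving (i) and (ii) simultaneously. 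The hard work is to differentiate $\mathcal{E}_{t}$ along (\ref{eq:microscopic}), the free evolution of $q_{t}^{(\mathrm{ref})}$, and the Hartree equation (\ref{eq:macro_gas}), verify that the mean-field pieces cancel by construction, and bound every residual term by $C(t)(\alpha_{t}+\rho^{-1}+|\Lambda|^{-2/3})$. The delicate point throughout is the bookkeeping of the explicit factors $|\Lambda|$: each projector $p_{1}^{(\mathrm{ref})}$ hitting $W$ supplies a factor $|\Lambda|^{-1/2}$ (since $\|p_{1}^{(\mathrm{ref})}\nabla W(x-y_{1})\|\lesssim|\Lambda|^{-1/2}$) and $\langle q_{1}^{(\mathrm{ref})}\rangle_{t}=|\Lambda|^{-1}\cdot|\Lambda|\langle q_{1}^{(\mathrm{ref})}\rangle_{t}\le|\Lambda|^{-1}\alpha_{t}$, so that e.g.\ $|\Lambda|\langle p_{1}^{(\mathrm{ref})}\nabla W\,q_{1}^{(\mathrm{ref})}\rangle_{t}\le(|\Lambda|\langle q_{1}^{(\mathrm{ref})}\rangle_{t})^{1/2}\le\alpha_{t}^{1/2}$; one must check that no surviving term carries a bare $|\Lambda|^{-1/3}$. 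This step also requires uniform-in-time dispersive/flatness control of $\phi_{t}^{(\mathrm{ref})}$ and a priori bounds such as $\|\epsilon_{t}\|_{2}+\|\epsilon_{t}\|_{\infty}\le F(t)$ for the Hartree excitation (from (\ref{eq:macro_gas}) via Duhamel and Sobolev estimates), which is where the growing functions $C_{1},C_{2}\in\mathtt{Bounds}$ originate, together with a mild continuity/bootstrap argument ensuring $\alpha_{t}\le 1$ so that the Taylor and Lipschitz estimates apply.
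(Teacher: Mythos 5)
Your overall architecture (a Gr\"onwall functional combining tracer variances, trajectory discrepancies and a gas-excitation term; Taylor expansion of $\nabla V,\nabla W$ around the mean position; the integration-by-parts bound $\left|\nabla W*|\phi_{t}^{(\mathrm{ref})}|^{2}\right|\leq C\left|\Lambda\right|^{-1/3}$ from the flatness assumption (\ref{eq:initial_gas_wf}); the observation that each projector onto the flat state supplies a factor $\left|\Lambda\right|^{-1/2}$) matches the paper's strategy, and merging the two Gr\"onwall steps into one quadratic functional is a viable repackaging. However, there is a genuine gap at the center of the argument: the functional $\mathcal{E}_{t}$ is never defined, and the two properties you require of it --- that $\frac{d}{dt}\mathcal{E}_{t}\leq C(t)(\alpha_{t}+\rho^{-1}+\left|\Lambda\right|^{-2/3})$ and that $\Vert\rho_{t}^{(\mathrm{micro})}-\rho_{t}^{(\mathrm{macro})}\Vert\leq C\alpha_{t}^{1/2}$ hold "by arrangement" --- together constitute essentially the whole theorem. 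Worse, the building block you do name is the wrong one: $\left|\Lambda\right|\langle q_{1}^{(\mathrm{ref})}\rangle_{t}$, with projectors relative to the \emph{freely evolving} reference state, converges to $\Vert\epsilon_{t}\Vert_{2}^{2}={\cal O}(1)$ (the excitation is of order one by construction, cf.\ (\ref{eq:ref_nom}) and Lemma \ref{lem:ref_and_excitation}), so it cannot start at zero and remain small under a Gr\"onwall inequality. The missing idea is the intermediate Hartree dynamics (\ref{eq:intermediate}), $i\partial_{t}\varphi_{t}=(-\Delta+W(\langle x\rangle_{t}-\cdot))\varphi_{t}$ with $\varphi_{0}=\phi^{(0)}$: the paper counts bad particles relative to $\varphi_{t}$, so that $\left|\Lambda\right|\langle q_{1}^{\varphi_{t}}\rangle_{t}$ genuinely stays of size $\alpha_{0}+\rho^{-1}$, and separately compares $\left|\Lambda\right|^{1/2}\varphi_{t}$ with $\phi_{t}^{(\mathrm{ref})}+\epsilon_{t}$ in the second functional $\beta_{t}$. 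Without this intermediate object your $\mathcal{E}_{t}$ would have to directly compare $q_{t}^{(\mathrm{ref})}\gamma_{t}^{(1)}q_{t}^{(\mathrm{ref})}$ with $\left|\epsilon_{t}\right\rangle\left\langle\epsilon_{t}\right|$ along the coupled flow, which is a different and substantially harder computation that you do not carry out.

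Two further points. First, your variance bound for $\mathcal{F}=\frac{1}{\rho}\sum_{k}\nabla W(x-y_{k})$ treats the $N^{2}-N$ cross terms as a "correlation correction absorbed into $\mathcal{E}_{t}$"; in the paper this is exactly where the two-particle quantity $\left|\Lambda\right|^{2}\langle q_{1}^{\varphi_{t}}q_{2}^{\varphi_{t}}\rangle_{t}$ must be included in the Gr\"onwall functional, since closing the estimate (\ref{eq:sym_trick}) requires $\Vert q_{1}^{\varphi_{t}}q_{2}^{\varphi_{t}}\Psi_{t}\Vert_{2}\leq\left|\Lambda\right|^{-1}\sqrt{\alpha_{t}}$; your functional omits this quantity, so the momentum-variance estimate does not close as written. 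Second, even granting control of the Gr\"onwall quantities, deducing claim (i) is not a matter of definition: the paper needs the decomposition (\ref{eq:density_PP})--(\ref{eq:density_QQ}), the comparison (\ref{eq:phi-eps-dist}), and the a priori bounds $\Vert\epsilon_{t}\Vert_{2}\leq C_{\epsilon}(t)$ and $\Vert p_{t}^{(\mathrm{ref})}\epsilon_{t}\Vert_{2}\leq\widetilde{C}_{\epsilon}(t)\left|\Lambda\right|^{-1/2}$ of Lemma \ref{lem:ref_and_excitation} to convert them into the operator-norm estimate, producing extra terms such as $\sqrt{\alpha_{t}}\,C_{\epsilon}(t)$ and $\widetilde{C}_{\epsilon}(t)\left|\Lambda\right|^{-1/2}$ that are not of the form $C\alpha_{t}^{1/2}$. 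You should supply the intermediate dynamics, the precise list of Gr\"onwall quantities (including the two-particle one), and the reduction of (i) to them before the argument can be considered complete.
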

The operator $tr{}_{x,y_{2},\ldots,y_{N}}$ stands for tracing
out the degrees of freedom specified in the subscript. Estimate
(i) in \thmref{main} quantifies how well the density matrix of the
gas excitations w.r.t. the reference state $\phi_{t}^{(\mathrm{ref})}$ is
approximated by the effective density matrix $\left|\epsilon_{t}\right\rangle \left\langle \epsilon_{t}\right|$,
while estimate (ii) quantifies how close the expected position and
velocity of the tracer particle are to the classical ones. Hence,
for large $\rho$ and $\left|\Lambda\right|$, the microscopic
dynamics and macroscopic dynamics yield arbitrarily close predictions, and, for
practical purposes, one may thus use the macroscopic equations
to study the behavior of the system.

\begin{rem}\label{rem:ini-con}
(i) We remark that inequalities (\ref{eq:density-ieq}) and (\ref{eq:traj-ieq}) hold for any initial conditions such that
\[
\alpha_{0}= o_{|\rho|\to\infty}(1)+o_{|\Lambda|\to\infty}(1),
\]
while inequalities (\ref{eq:density-ieq2}) and (\ref{eq:traj-ieq2}) only hold for the initial conditions (\ref{eq:initial_psi}) satisfying (\ref{eq:initial_spread}) and (\ref{eq:initial_gas_wf}). (ii) Theorem \ref{thm:main} can easily be generalized to systems of $M>1$ interacting tracer particles, as discussed in Remark \ref{rem:many-tracer} below.
\end{rem}

\section{Proof of Main Theorem}

The strategy of the proof is a two step procedure. First, we probe
how well the gas particles in $\Psi_{t}$ retain the product structure
encoded in the initial wave function (\ref{eq:initial_psi}). In particular,
we compare the microscopic dynamics to the effective dynamics generated
by
\begin{eqnarray}
i\partial_{t}\varphi_{t}(y) & = & \left(-\Delta_{y}+W\left(\left\langle x\right\rangle _{t}-y\right)\right)\varphi_{t}(y)\label{eq:intermediate}
\end{eqnarray}
for the initial value
\begin{equation}
\varphi_{t}|_{t=0}=\phi^{(0)}.\label{eq:intermediate initial value}
\end{equation}

\begin{defn}
We denote by
\[
 \varphi:\mathbb{R}\to L^{2}(\mathbb{R}^{3},\mathbb{C}),\qquad t\mapsto\varphi_{t}
\]
the unique solution to (\ref{eq:intermediate}) with initial condition
(\ref{eq:intermediate initial value}).
\end{defn}
As shown in \cite{Pickl_2011_A-Simple-Derivation-of-Mean-Field-Limits-for-Quantum-Systems}
for pure Bose gases without a tracer particle, it is convenient
to control the deviation of the gas wave function from a product wave function with the
help of a Grönwall-type estimate of the form
\begin{equation}
\frac{d}{dt}\widetilde{\alpha}_{t}\leq C\,\widetilde{\alpha}_{t}+\frac{C}{N},\qquad\text{where}\qquad \widetilde{\alpha}_{t}:=\left\langle q_{1}^{\varphi_{t}}\right\rangle _{t}.\label{eq:mean-field alpha}
\end{equation}
The quantity $\widetilde{\alpha}_{t}$ counts the relative number
of tensor components in $\Psi_{t}$ not showing product structure.
This can be seen best by means of the identity (5) in Lemma 2.2 of \cite{Pickl_2010}, i.e.,
\begin{eqnarray*}
\left\langle q_{1}^{\varphi_{t}}\right\rangle _{t}=\sum_{k=0}^{N}\frac{k}{N}\left\langle q_{1}^{\varphi_{t}}\odot q_{2}^{\varphi_{t}}\odot\ldots\odot q_{k}^{\varphi_{t}}\odot p_{k+1}^{\varphi_{t}}\odot p_{k+2}^{\varphi_{t}}\odot\ldots\odot p_{N}^{\varphi_{t}}\right\rangle _{t}
\end{eqnarray*}
where $\odot$ denotes the symmetrized tensor product (see (\ref{eq:projectors}) for the definition of $p_{k}^{\varphi_{t}}$ and $q_{k}^{\varphi_{t}}$). Hence, $\widetilde{\alpha}_{t}$ corresponds to the expectation value of the ratio $k/N$ between the number of particles, $k$, that are not in the state of the Hartree wave function and the total number of gas particles $N$.

The situation considered in this paper is more complicated because
of the presence of the tracer particle, which couples to the gas and
 generates entanglement between its state and the state of the
gas particles. As a consequence, the error estimates
will not only depend on $\widetilde{\alpha}_{t}$ but also on the
position and momentum distribution of the tracer particle wave function.
To see this we consider the example of an initial wave function $\chi^{(0)}$
of the tracer particle formed by a superposition of two wave packets
whose supports are separated by a distance of order one. In the worst
case the mean position $\left\langle x\right\rangle _{t}$ could then
be somewhere in between the supports of these wave packets. In a situation like this, the effective interaction term $W\left(\left\langle x\right\rangle _{t}-y\right)$
in (\ref{eq:intermediate}) has nothing to do with the actual interaction
given by $\sum_{k=1}^{N}W(x-y_{k})$ in (\ref{eq:micro hamiltonian}),
and there is no reason to expect that $\widetilde{\alpha}_{t}$ stays small.

Moreover, as discussed above, $\widetilde{\alpha}_{t}$ is the expected
ratio $k/N$ of ``bad'' gas particles w.r.t. to $N$. Yet, for the control of the dynamics of the tracer particle, this ratio will not be very relevant, because, due to the support of $W$, the tracer particle only sees $O(\rho)$ many gas particles at a time. In the worst case scenario however, even though $k/N$ might be small, all the ``bad'' gas particles could actually be in the vicinity of the tracer particle. It is therefore important to know how many gas particles are ``bad'', as
compared to a number of gas particles of $O(\rho)$. The latter amounts to estimating the quantity $|\Lambda|\left\langle q_{1}^{\varphi_{t}}\right\rangle_{t}$ that gives the expected ratio $k/\rho$.

We must  therefore carefully adapt $\widetilde{\alpha}_{t}$ to our situation. It turns out that, among appropriate choices that make the desired estimates
quite easy, the following one is convenient.
\begin{defn}
We define 
\begin{equation}
\alpha_{t}:=\sqrt{\left\langle \left(x-\left\langle x\right\rangle _{t}\right)^{2}\right\rangle _{t}^{2}+\left\langle \left(\frac{p-\left\langle p\right\rangle _{t}}{\rho}\right)^{2}\right\rangle _{t}^{2}+\left(\left|\Lambda\right|\left\langle q_{1}^{\varphi_{t}}\right\rangle _{t}\right)^{2}+\left(\left|\Lambda\right|^{2}\left\langle q_{1}^{\varphi_{t}}q_{2}^{\varphi_{t}}\right\rangle _{t}\right)^{2}}\label{eq:alpha_def}
\end{equation}
for $t\in\mathbb{R}$.
\end{defn}
Note that the function $t\mapsto\alpha_{t}$ is smooth. A key estimate is the following lemma.
\begin{lem}
\label{lem:alpha}There are $C_{\alpha}^{(1)},C_{\alpha}^{(2)}\in\mathtt{Bounds}$
such that, for all $t\geq0$, the following estimate holds:
\[
\frac{d}{dt}\alpha_{t}\leq C_{\alpha}^{(1)}(t)\alpha_{t}+C_{\alpha}^{(2)}(t)\rho^{-1}.
\]

\end{lem}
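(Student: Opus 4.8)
The plan is to differentiate $\alpha_t^2$ (which is easier to handle than $\alpha_t$ itself), bound $\frac{d}{dt}\alpha_t^2$ by $C(t)\,\alpha_t^2 + C(t)\rho^{-1}\alpha_t + C(t)\rho^{-2}$, and then deduce the stated inequality for $\frac{d}{dt}\alpha_t$ using $\frac{d}{dt}\alpha_t = \frac{1}{2\alpha_t}\frac{d}{dt}\alpha_t^2$ (treating the locus $\alpha_t = 0$ by a standard limiting argument, or by working with $\sqrt{\alpha_t^2 + \eta}$ and letting $\eta \downarrow 0$). Writing $\alpha_t^2 = A_t^2 + B_t^2 + \Gamma_t^2 + \Theta_t^2$ with $A_t^2 := \langle(x - \langle x\rangle_t)^2\rangle_t$, $B_t^2 := \langle((p-\langle p\rangle_t)/\rho)^2\rangle_t$, $\Gamma_t := |\Lambda|\langle q_1^{\varphi_t}\rangle_t$, and $\Theta_t := |\Lambda|^2\langle q_1^{\varphi_t}q_2^{\varphi_t}\rangle_t$, I would estimate the time derivative of each of the four summands separately, using the microscopic equation \eqref{eq:microscopic}, the intermediate equation \eqref{eq:intermediate}, and the fact that $\langle x\rangle_t$, $\langle p\rangle_t$ evolve by the Ehrenfest relations $\frac{d}{dt}\langle x\rangle_t = \langle p/\rho\rangle_t$, $\frac{d}{dt}\langle p\rangle_t = -\rho\langle\nabla V(x)\rangle_t - \sum_k\langle\nabla W(x - y_k)\rangle_t$.

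The first two terms (variance of position and of velocity) are the semiclassical part. Differentiating $A_t^2$ produces a cross term $2\langle(x-\langle x\rangle_t)\cdot((p-\langle p\rangle_t)/\rho)\rangle_t$, which is bounded by $2A_tB_t \le A_t^2 + B_t^2$ by Cauchy--Schwarz. Differentiating $B_t^2$ brings in the force terms: the $V$-contribution, after Taylor-expanding $\nabla V(x)$ around $\langle x\rangle_t$, gives a term controlled by $\|\nabla^2 V\|_\infty A_t B_t$ plus a harmless commutator; the $W$-contribution is the delicate one, since $\sum_k \nabla W(x - y_k)$ involves all gas particles. Here one splits each factor using $1 = p_k^{\varphi_t} + q_k^{\varphi_t}$: the "all $p$" piece reconstructs the mean-field force $\langle\nabla W * |\varphi_t|^2(x)\rangle_t$ and, compared with the analogous term in \eqref{eq:macro_tracer}, contributes to the semiclassical closeness; the pieces with at least one $q$ are estimated by the combinatorial/operator bounds of the Pickl method (as in \cite{Pickl_2011_A-Simple-Derivation-of-Mean-Field-Limits-for-Quantum-Systems}, \cite{Pickl_2010}), using that $W$ has compact support so only $O(\rho)$ gas particles interact, which is exactly why the correct counting variables are $\Gamma_t = |\Lambda|\langle q_1^{\varphi_t}\rangle_t$ and $\Theta_t = |\Lambda|^2\langle q_1^{\varphi_t}q_2^{\varphi_t}\rangle_t$ rather than $\langle q_1^{\varphi_t}\rangle_t$. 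These estimates will produce terms of the form $C(\Gamma_t + \Theta_t + A_t + B_t + \rho^{-1})$ feeding into $\frac{d}{dt}B_t^2$ (after dividing by $B_t$).

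For $\frac{d}{dt}\Gamma_t$ and $\frac{d}{dt}\Theta_t$ I would follow the Grönwall scheme of \cite{Pickl_2011_A-Simple-Derivation-of-Mean-Field-Limits-for-Quantum-Systems}: the time derivative of $\langle q_1^{\varphi_t}\rangle_t$ equals $i\langle[H, p_1^{\varphi_t}]\rangle_t$ plus the term from $\dot\varphi_t$, and the only part of $H$ that does not commute with $p_1^{\varphi_t}$ is the interaction $\sum_k W(x - y_k)$; since $\varphi_t$ solves \eqref{eq:intermediate} with potential $W(\langle x\rangle_t - y)$, the mean-field part cancels and one is left with the difference $W(x - y_1) - W(\langle x\rangle_t - y_1)$, which is $O(\|\nabla W\|_\infty |x - \langle x\rangle_t|)$ on the relevant set, plus exchange terms among gas particles controlled using symmetry and the operator inequalities for $q_1 p_2, q_1 q_2$, etc. Multiplying by $|\Lambda|$ (resp. $|\Lambda|^2$) and using $\rho = N/|\Lambda|$ converts the naive $1/N$ gains into the $\rho^{-1}$ on the right-hand side and closes the system: $\frac{d}{dt}\Gamma_t \le C(\Gamma_t + \Theta_t + A_t + \rho^{-1})$ and similarly for $\Theta_t$, with the extra $|\Lambda|$ factor absorbed because pairs with two $q$'s are one order smaller in $1/N$. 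Combining the four bounds and using $XY \le X^2 + Y^2$ throughout gives $\frac{d}{dt}\alpha_t^2 \le C(t)\alpha_t^2 + C(t)\rho^{-1}\alpha_t + C(t)\rho^{-2} \le C'(t)\alpha_t^2 + C'(t)\rho^{-2}$, hence $\frac{d}{dt}\alpha_t \le C_\alpha^{(1)}(t)\alpha_t + C_\alpha^{(2)}(t)\rho^{-1}$.

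The main obstacle is the interaction term in $\frac{d}{dt}B_t^2$ together with the $q$-heavy pieces in $\frac{d}{dt}\Gamma_t, \frac{d}{dt}\Theta_t$: one must simultaneously extract the mean-field force (to get semiclassical closeness), control the commutator between $W(x-y_k)$ and the projectors $p_k^{\varphi_t}$ (which do not commute because $\varphi_t$ and the localization in $x$ interact), and do the bookkeeping so that the compact support of $W$ yields the factor $|\Lambda|^{-1}$ that turns a term of size $\langle q_1^{\varphi_t}\rangle_t$ into one of size $|\Lambda|^{-1}\Gamma_t$. Getting the powers of $|\Lambda|$ and $\rho$ to match across all four contributions — so that the right-hand side really closes at order $\rho^{-1}$ and not worse — is where the care is needed; I expect this to be carried out in a sequence of auxiliary lemmas bounding each operator-valued error term in the norms $\|\cdot\|_p$ and $\|\cdot\|$ introduced in Section~\ref{sec:notation}.
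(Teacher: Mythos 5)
Your overall architecture---a Gr\"onwall bound obtained by differentiating the four components of $\alpha_t$ separately, Taylor-expanding the potentials around $\langle x\rangle_t$, inserting $\mathbbm{1}=p_1^{\varphi_t}+q_1^{\varphi_t}$, and using bosonic symmetry together with the pair quantity $|\Lambda|^2\langle q_1^{\varphi_t}q_2^{\varphi_t}\rangle_t$---is the same as the paper's. Two bookkeeping caveats: the paper avoids the $\alpha_t^2$ detour by noting that each component $f_i$ satisfies $|f_i|\le\alpha_t$, whence $\frac{d}{dt}\alpha_t\le\sum_i|\frac{d}{dt}f_i|$ directly (your route, with a constant inhomogeneity $\rho^{-2}$ in $\frac{d}{dt}\alpha_t^2$, does not give $\frac{d}{dt}\alpha_t\le C\alpha_t+C\rho^{-1}$ where $\alpha_t\ll\rho^{-1}$; one must regularize with $\sqrt{\alpha_t^2+\rho^{-2}}$, not $\eta\downarrow0$); and in (\ref{eq:alpha_def}) the two variances enter \emph{squared}, which your decomposition drops and which is needed so that $\Vert(x-\langle x\rangle_t)\Psi_t\Vert_2\le\alpha_t^{1/2}$ and $\Vert q_1^{\varphi_t}\Psi_t\Vert_2\le(\alpha_t/|\Lambda|)^{1/2}$ multiply to the homogeneous quantity $\alpha_t|\Lambda|^{-1/2}$.

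The genuine gap is your treatment of the mean-field ($pp$) part of the interaction in the derivative of the velocity variance. You propose to handle $p_1^{\varphi_t}\nabla W(x-y_1)p_1^{\varphi_t}=\nabla W*|\varphi_t|^2(x)\,p_1^{\varphi_t}$ by ``comparing with the analogous term in (\ref{eq:macro_tracer})''. No such comparison is available inside this lemma: $\alpha_t$ contains neither $X_t$ nor $\epsilon_t$, and the quantity being differentiated is the variance $\langle(\frac{p-\langle p\rangle_t}{\rho})^2\rangle_t$, not the difference between the mean velocity and $\dot X_t$ (that difference belongs to $\beta_t$ and is treated in Lemma \ref{lem:beta}). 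The mean-field force term must therefore be shown to be of order $\alpha_t+\rho^{-1}$ on its own, and a direct Cauchy--Schwarz bound only yields $|\Lambda|\,\Vert\nabla W*|\varphi_t|^2\Vert_\infty\,\Vert\frac{p-\langle p\rangle_t}{\rho}\Psi_t\Vert_2\lesssim\sqrt{\alpha_t}$, which does not close the Gr\"onwall argument. The paper's fix, in (\ref{eq:pp_1}), exploits that the momentum factor is \emph{centered}: $\langle p_1^{\varphi_t}\frac{p-\langle p\rangle_t}{\rho}\rangle_t=-\langle q_1^{\varphi_t}\frac{p-\langle p\rangle_t}{\rho}\rangle_t=O(\alpha_t|\Lambda|^{-1/2})$, combined with $\Vert\varphi_t\Vert_\infty\lesssim|\Lambda|^{-1/2}$ from the propagation estimate of Lemma \ref{lem:proagation}. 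That estimate (resting on the flatness assumption (\ref{eq:initial_gas_wf})), rather than the compact support of $W$ by itself, is what supplies the repeated factors of $|\Lambda|^{-1/2}$ in the bookkeeping, and it should be named explicitly among your auxiliary lemmas. Finally, since the gas is ideal, the derivatives of $|\Lambda|\langle q_1^{\varphi_t}\rangle_t$ and $|\Lambda|^2\langle q_1^{\varphi_t}q_2^{\varphi_t}\rangle_t$ involve no gas--gas exchange terms; the pair quantity is needed instead to absorb the cross terms (\ref{eq:sym_trick}) produced by Cauchy--Schwarz on $\sum_k q_k^{\varphi_t}\nabla W(x-y_k)\Psi_t$ within the velocity-variance estimate.
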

In a second step we then control the error made when replacing the
mean position $t\mapsto\left\langle x\right\rangle _{t}$ in (\ref{eq:intermediate}), which fulfills the Ehrenfest equation
\begin{equation}
\frac{d^{2}}{dt^{2}}\left\langle x\right\rangle _{t}=\left\langle -\nabla V(x)-\frac{1}{\rho}\nabla\sum_{k=1}^{N}W(x-y_{k})\right\rangle _{t},\label{eq:ehrenfest}
\end{equation}
by the classical trajectory $t\mapsto X_{t}$ obeying (\ref{eq:macro_tracer}).
Furthermore, in order to probe the excited modes of the gas, we need
good control on how well the effective wave function of the
gas, $\left|\Lambda\right|^{1/2}\varphi_{t}$, approximates 
\[
\phi_{t}^{(\mathrm{ref})}+\epsilon_{t}.
\]

For this second step, too, we will invoke a Grönwall-type estimate. We will consider the following expression.
\begin{defn}
We define 
\begin{equation}
t\mapsto\beta_{t}:=\sqrt{\left(X_{t}-\left\langle x\right\rangle _{t}\right)^{2}+\left(\frac{d\left(X_{t}-\left\langle x\right\rangle _{t}\right)}{dt}\right)^{2}+\left\Vert \left(\phi_{t}^{(\mathrm{ref})}+\epsilon_{t}\right)-\left|\Lambda\right|^{1/2}\varphi_{t}\right\Vert _{2}^{2}},\label{eq:beta_def}
\end{equation}
for $t\in\mathbb{R}$. 
\end{defn}
Note that the function $t\mapsto\beta_{t}$ is smooth. We will prove the following estimate.
\begin{lem}
\label{lem:beta}There are $C_{\beta}^{(1)},C_{\beta}^{(2)}\in\mathtt{Bounds}$
such that, for all $t\geq0$, the following estimate holds:
\begin{equation}
\frac{d}{dt}\beta_{t}\leq C_{\beta}^{(1)}(t)\left(\beta_{t}+\beta_{t}^{2}\right)+C_{\beta}^{(2)}(t)\left(\sqrt{\alpha_{t}}+\alpha_{t}+\left|\Lambda\right|^{-1/3}\right).\label{eq:beta estimate}
\end{equation}

\end{lem}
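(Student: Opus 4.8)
The plan is to establish \lemref{beta} by differentiating the three contributions to $\beta_t^2$ separately and collecting the resulting bounds. Write $\beta_t^2 = a_t^2 + v_t^2 + n_t^2$ where $a_t := X_t - \left\langle x\right\rangle_t$, $v_t := \frac{d}{dt}(X_t - \left\langle x\right\rangle_t)$, and $n_t := \left\Vert (\phi_t^{(\mathrm{ref})}+\epsilon_t) - |\Lambda|^{1/2}\varphi_t\right\Vert_2$. Since $\frac{d}{dt}\beta_t \leq \beta_t^{-1}(|a_t||\dot a_t| + |v_t||\dot v_t| + n_t|\dot n_t|)$, it suffices to bound each of $|\dot a_t|$, $|\dot v_t|$, and $|\dot n_t|$ by a constant times $(\beta_t + \beta_t^2) + (\sqrt{\alpha_t} + \alpha_t + |\Lambda|^{-1/3})$ up to harmless factors; the $\beta_t^{-1}$ in front then converts the $\beta_t$-proportional pieces to constants and leaves the stated form. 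Here $\dot a_t = v_t$, so the position line is trivial, giving a contribution $\leq |v_t| \leq \beta_t$.

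The velocity line is the heart of the argument. We have $\dot v_t = \ddot X_t - \frac{d^2}{dt^2}\left\langle x\right\rangle_t$, where $\ddot X_t$ is given by the macroscopic equation (\ref{eq:macro_tracer}) and $\frac{d^2}{dt^2}\left\langle x\right\rangle_t$ by the Ehrenfest relation (\ref{eq:ehrenfest}). First I would rewrite the Ehrenfest force $-\left\langle \nabla V(x)\right\rangle_t - \frac{1}{\rho}\left\langle \nabla\sum_k W(x-y_k)\right\rangle_t$: the potential term is close to $-\nabla V(X_t)$ up to an error controlled by the position variance $\left\langle (x-\left\langle x\right\rangle_t)^2\right\rangle_t^{1/2} \leq \alpha_t^{1/2}$ (Taylor expansion, smoothness of $V$) plus $|a_t| \leq \beta_t$. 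For the interaction term, use permutation symmetry to write $\frac{1}{\rho}\left\langle \nabla\sum_k W(x-y_k)\right\rangle_t = \frac{N}{\rho}\left\langle \nabla W(x-y_1)\right\rangle_t = |\Lambda|\left\langle \nabla W(x-y_1)\right\rangle_t$, then insert $1 = p_1^{\varphi_t} + q_1^{\varphi_t}$ on both sides of $\nabla W(x-y_1)$. The $p_1 \cdots p_1$ piece produces $|\Lambda|\left\langle \nabla W(x-y_1)\right\rangle$ with $y_1$ effectively in the state $\varphi_t$, i.e. $\int |\varphi_t(y)|^2 \nabla W(\left\langle x\right\rangle_t - y)\,dy$ up to the position-variance error; using $|\Lambda|^{1/2}\varphi_t \approx \phi_t^{(\mathrm{ref})} + \epsilon_t$ (error $n_t \leq \beta_t$) and expanding $|\phi^{(\mathrm{ref})}_t + \epsilon_t|^2 = |\phi^{(\mathrm{ref})}_t|^2 + 2\Re\overline{\phi^{(\mathrm{ref})}_t}\epsilon_t + |\epsilon_t|^2$, one recovers exactly the three force terms in (\ref{eq:macro_tracer}), evaluated at $\left\langle x\right\rangle_t$ rather than $X_t$ — the difference being $\leq C\beta_t$ by smoothness of $\nabla W$. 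The cross terms $p_1 q_1$, $q_1 p_1$, $q_1 q_1$ are the genuine error: by Cauchy--Schwarz and boundedness of $\nabla W$ the $p_1 q_1 + q_1 p_1$ contribution is $\leq C\left\langle q_1^{\varphi_t}\right\rangle_t^{1/2}$ after the $|\Lambda|$ is distributed, but one must be careful — naively $|\Lambda|\left\langle q_1^{\varphi_t}\right\rangle_t^{1/2} = (|\Lambda|^2\left\langle q_1^{\varphi_t}\right\rangle_t)^{1/2}$ is not directly in $\alpha_t$; the standard trick (as in \cite{Pickl_2011_A-Simple-Derivation-of-Mean-Field-Limits-for-Quantum-Systems}) is to exploit that $W(x-y_1)$ has support of volume $O(1)$, so that the operator $q_1^{\varphi_t}\nabla W(x-y_1) q_1^{\varphi_t}$ couples $y_1$ and $y_2$ and one gains an extra factor, yielding control by $|\Lambda|^2\left\langle q_1^{\varphi_t} q_2^{\varphi_t}\right\rangle_t \leq \alpha_t$ for the $q_1 q_1$ piece and by $\sqrt{|\Lambda|\left\langle q_1^{\varphi_t}\right\rangle_t \cdot |\Lambda|^2\left\langle q_1 q_2\right\rangle_t}$-type expressions, hence $\leq \sqrt{\alpha_t} + \alpha_t$, for the mixed piece. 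This combinatorial bookkeeping, making sure every appearance of $|\Lambda|$ is absorbed into one of the four terms defining $\alpha_t$, is the main obstacle.

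For the gas line, $\dot n_t$: subtract the equation satisfied by $\phi_t^{(\mathrm{ref})} + \epsilon_t$ — namely $i\partial_t(\phi_t^{(\mathrm{ref})}+\epsilon_t) = (-\Delta_y + W(X_t - y))(\phi_t^{(\mathrm{ref})}+\epsilon_t) - W(X_t-y)\epsilon_t + \ldots$, obtained by adding (\ref{eq:free_reference}) and (\ref{eq:macro_gas}) and noting the cancellation; more precisely $i\partial_t(\phi^{(\mathrm{ref})}_t+\epsilon_t) = -\Delta_y(\phi^{(\mathrm{ref})}_t+\epsilon_t) + W(X_t-y)(\phi^{(\mathrm{ref})}_t + \epsilon_t)$ exactly — from the equation $i\partial_t(|\Lambda|^{1/2}\varphi_t) = (-\Delta_y + W(\left\langle x\right\rangle_t - y))|\Lambda|^{1/2}\varphi_t$. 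Then $i\partial_t n_t^{\mathrm{vec}} = -\Delta_y n_t^{\mathrm{vec}} + W(X_t - y) n_t^{\mathrm{vec}} + (W(X_t-y) - W(\left\langle x\right\rangle_t - y))|\Lambda|^{1/2}\varphi_t$ where $n_t^{\mathrm{vec}} := (\phi^{(\mathrm{ref})}_t+\epsilon_t) - |\Lambda|^{1/2}\varphi_t$. A standard energy estimate (the skew-adjoint part drops out) gives $\frac{d}{dt}n_t \leq \left\Vert (W(X_t-\cdot) - W(\left\langle x\right\rangle_t-\cdot))|\Lambda|^{1/2}\varphi_t\right\Vert_2$. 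Bounding the difference of potentials by $C|a_t| = C|X_t - \left\langle x\right\rangle_t|$ pointwise on its (volume $O(1)$) support and using $\left\Vert |\Lambda|^{1/2}\varphi_t\right\Vert_\infty \leq C$ (from (\ref{eq:initial_gas_wf}), since $\varphi_t$ stays flat — here is where $|\Lambda|^{-1/3}$-type errors enter, from how well $|\Lambda|^{1/2}\varphi_t$ remains bounded and flat and from the remark replacing the inhomogeneity by its value at $X_t$) yields $\frac{d}{dt}n_t \leq C|a_t| \cdot (\text{vol support})^{1/2} \leq C\beta_t + C|\Lambda|^{-1/3}$. Assembling the three lines and dividing by $\beta_t$ gives $\frac{d}{dt}\beta_t \leq C(\beta_t + \beta_t^2) + C(\sqrt{\alpha_t} + \alpha_t + |\Lambda|^{-1/3})$, with the $\beta_t^2$ appearing only from the quadratic-in-$a_t$ Taylor remainders in the force comparison and the time-dependence of the constants tracked through $\mathtt{Bounds}$.
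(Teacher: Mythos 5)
Your overall strategy coincides with the paper's: split $\frac{d}{dt}\beta_{t}$ into the contributions of $X_{t}-\left\langle x\right\rangle _{t}$, of its time derivative, and of $\left\Vert (\phi_{t}^{(\mathrm{ref})}+\epsilon_{t})-\left|\Lambda\right|^{1/2}\varphi_{t}\right\Vert _{2}$; compare the Ehrenfest force (\ref{eq:ehrenfest}) with the macroscopic force (\ref{eq:macro_tracer}) by symmetrization, insertion of $\mathbbm{1}=p_{1}^{\varphi_{t}}+q_{1}^{\varphi_{t}}$ and Taylor expansion; and run an energy estimate on the difference of the gas wave functions. Your cross-term bookkeeping is slightly overcomplicated (the $q_{1}\cdots q_{1}$ piece of (\ref{eq:convolution}) is controlled directly by $\left|\Lambda\right|\left\langle q_{1}^{\varphi_{t}}\right\rangle _{t}\leq\alpha_{t}$; the two-particle correlation $\left|\Lambda\right|^{2}\left\langle q_{1}^{\varphi_{t}}q_{2}^{\varphi_{t}}\right\rangle _{t}$ is needed only in the proof of Lemma \ref{lem:alpha}), but it lands on the right bounds.

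There is, however, one genuine gap. When you expand $|\phi_{t}^{(\mathrm{ref})}+\epsilon_{t}|^{2}=|\phi_{t}^{(\mathrm{ref})}|^{2}+2\Re\overline{\phi_{t}^{(\mathrm{ref})}}\epsilon_{t}+|\epsilon_{t}|^{2}$ and claim to ``recover exactly the three force terms in (\ref{eq:macro_tracer})'', you silently discard the contribution $\nabla W*|\phi_{t}^{(\mathrm{ref})}|^{2}(X_{t})$: the macroscopic equation (\ref{eq:macro_tracer}) contains only the $|\epsilon_{t}|^{2}$ and cross terms, not a $|\phi_{t}^{(\mathrm{ref})}|^{2}$ term. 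This leftover is a priori of order one, since $\Vert\phi_{t}^{(\mathrm{ref})}\Vert_{\infty}=O(1)$ and $W$ has compact support, so without a further argument the comparison of forces fails at $O(1)$. The paper disposes of it (term (\ref{eq:gradient})) by integrating by parts, $\nabla W*|\phi_{t}^{(\mathrm{ref})}|^{2}=W*\nabla(|\phi_{t}^{(\mathrm{ref})}|^{2})$, and invoking the propagation estimate $\Vert\nabla\phi_{t}^{(\mathrm{ref})}\Vert_{\infty}\leq C_{\mathrm{ref}}\left|\Lambda\right|^{-1/3}$ of Lemma \ref{lem:proagation}, which in turn rests on the flatness hypothesis $\Vert\widehat{\nabla\phi^{(0)}}\Vert_{1}\leq C\left|\Lambda\right|^{-5/6}$ in (\ref{eq:initial_gas_wf}). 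This is the sole source of the $\left|\Lambda\right|^{-1/3}$ in (\ref{eq:beta estimate}); your attribution of that error to the gas line (the estimate of $\dot{n}_{t}$) is misplaced, as that line yields only $C\beta_{t}$, exactly as in (\ref{eq:beta_3_final}). A minor further caution: Lemma \ref{lem:proagation} does not give $\Vert\left|\Lambda\right|^{1/2}\varphi_{t}\Vert_{\infty}\leq C$ directly; one must split $\varphi_{t}=e^{i\Delta t}\phi^{(0)}+(\varphi_{t}-e^{i\Delta t}\phi^{(0)})$ and pair the $L^{\infty}$ bound on the free part with $\Vert W\Vert_{2}$ and the $L^{2}$ bound on the remainder with $\Vert W\Vert_{\infty}$.
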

This will complete the second step, and combination of Lemma \ref{lem:alpha} and Lemma \ref{lem:beta} will complete the proof of our main theorem.
\begin{proof}[\textbf{Proof of \thmref{main}}]
With the help of Gr\"onwall's Lemma, \lemref{alpha} implies that there is some $C_{\alpha}\in\mathtt{Bounds}$
such that
\begin{equation}
\alpha_{t}\leq C_{\alpha}(t)\left(\alpha_{0}+\rho^{-1}\right).\label{eq:alpha bound}
\end{equation}
Moreover, the function $\beta_{t}$ is smooth, and, by our choice of initial conditions
\[\beta_{t}|_{t=0}=0.\]
We assume further that
\begin{eqnarray}
\alpha_{t}|_{t=0}=o_{|\rho|\to\infty}(1)+o_{|\Lambda|\to\infty}(1),\label{eq:beta-cond}
\end{eqnarray}
i.e., that the right-hand side of (\ref{eq:beta-cond}) becomes arbitrarily small for sufficiently large parameters $\rho$ and $|\Lambda|$. Hence, there exists a time $T>0$ such that, for all $0\leq t < T$, the bound
$0\leq \beta_{t} \leq 1$ holds. The estimate in Lemma \ref{lem:beta} then
implies 
\begin{eqnarray}
\frac{d}{dt}\beta_{t} & \leq & 2C_{\beta}^{(1)}(t)\beta_{t}+C_{\beta}^{(2)}(t)\left(\sqrt{C_{\alpha}(t)\left(\alpha_{0}+\rho^{-1}\right)}+C_{\alpha}(t)\left(\alpha_{0}+\rho^{-1}\right)+\left|\Lambda\right|^{-1/3}\right),\label{eq:beta ineq}
\end{eqnarray}
so that Grönwall's Lemma guarantees the existence of some $C_{\beta}\in\mathtt{Bounds}$ such that
\begin{equation}
\beta_{t}\leq C_{\beta}(t)\left(\alpha_{0}^{1/2}+\rho^{-1/2}+\left|\Lambda\right|^{-1/3}\right).\label{eq:beta bound}
\end{equation}
Let $\overline{T}(\rho,\left|\Lambda\right|)$ be the supremum of all such times $T$, and let us assume that $\overline{T}(\rho,\left|\Lambda\right|)$
is uniformly bounded. Then, upon choosing $\rho$ and $\left|\Lambda\right|$
large enough in (\ref{eq:beta bound}), we can arrange for $\beta_{\overline{T}(\rho,\left|\Lambda\right|)}<\frac{1}{2}$.
However, from Grönwall's Lemma it follows that  (\ref{eq:beta bound})
holds for some time $t>\overline{T}(\rho,\left|\Lambda\right|)$. This contradicts our assumption, and hence
\[
\lim_{\rho,|\Lambda|\to\infty}\overline{T}(\rho,|\Lambda|)=\infty.
\]
We therefore conclude that (\ref{eq:beta bound}) holds for arbitrarily large $t\geq 0$, provided $\rho$ and $|\Lambda|$ are large enough. This proves claim (ii) of Theorem \thmref{main}.

Next, we prove claim (i). Similarly to \cite{Pickl_2011_A-Simple-Derivation-of-Mean-Field-Limits-for-Quantum-Systems},
we start by inserting the identity $\mathbbm{1}=p_{1}^{\varphi_{t}}+q_{1}^{\varphi_{t}}$
on the left and right-hand side of $\left|\Psi_{t}\right\rangle \left\langle \Psi_{t}\right|$,
which yields
\begin{eqnarray}
\left\Vert \rho_{t}^{(\mathrm{micro})}-\rho_{t}^{(\mathrm{macro})}\right\Vert  & \equiv & \left\Vert \left|\Lambda\right|q_{t}^{(\mathrm{ref})}\, tr{}_{x,y_{2},\ldots,y_{N}}\left|\Psi_{t}\right\rangle \left\langle \Psi_{t}\right|q_{t}^{(\mathrm{ref})}-\left|\epsilon_{t}\right\rangle \left\langle \epsilon_{t}\right|\right\Vert \nonumber \\
 & \leq & \left\Vert \left|\Lambda\right|q_{t}^{(\mathrm{ref})}\, tr{}_{x,y_{2},\ldots,y_{N}}\left[p_{1}^{\varphi_{t}}\left|\Psi_{t}\right\rangle \left\langle \Psi_{t}\right|p_{1}^{\varphi_{t}}\right]q_{t}^{(\mathrm{ref})}-\left|\epsilon_{t}\right\rangle \left\langle \epsilon_{t}\right|\right\Vert \label{eq:density_PP}\\
 &  & +2\left|\Lambda\right|\left\Vert q_{t}^{(\mathrm{ref})}\, tr{}_{x,y_{2},\ldots,y_{N}}\left[p_{1}^{\varphi_{t}}\left|\Psi_{t}\right\rangle \left\langle \Psi_{t}\right|q_{1}^{\varphi_{t}}\right]q_{t}^{(\mathrm{ref})}\right\Vert \label{eq:density_PQ}\\
 &  & +\left|\Lambda\right|\left\Vert q_{t}^{(\mathrm{ref})}\, tr{}_{x,y_{2},\ldots,y_{N}}\left[q_{1}^{\varphi_{t}}\left|\Psi_{t}\right\rangle \left\langle \Psi_{t}\right|q_{1}^{\varphi_{t}}\right]q_{t}^{(\mathrm{ref})}\right\Vert .\label{eq:density_QQ}
\end{eqnarray}
In order to estimate (\ref{eq:density_PP}) we need the preliminary
bound 
\begin{align}
 & \left\Vert \left|\Lambda\right|q_{t}^{(\mathrm{ref})}\left|\varphi_{t}\right\rangle \left\langle \varphi_{t}\right|q_{t}^{(\mathrm{ref})}-\left|\epsilon_{t}\right\rangle \left\langle \epsilon_{t}\right|\right\Vert \nonumber \\
= & \left\Vert q_{t}^{(\mathrm{ref})}\left|\left|\Lambda\right|^{1/2}\varphi_{t}-\left(\phi_{t}^{(\mathrm{ref})}+\epsilon_{t}\right)+\left(\phi_{t}^{(\mathrm{ref})}+\epsilon_{t}\right)\right\rangle \left\langle \left|\Lambda\right|^{1/2}\varphi_{t}-\left(\phi_{t}^{(\mathrm{ref})}+\epsilon_{t}\right)+\left(\phi_{t}^{(\mathrm{ref})}+\epsilon_{t}\right)\right|q_{t}^{(\mathrm{ref})}-\left|\epsilon_{t}\right\rangle \left\langle \epsilon_{t}\right|\right\Vert \nonumber \\
\leq & \left\Vert \left|\Lambda\right|^{1/2}\varphi_{t}-\left(\phi_{t}^{(\mathrm{ref})}+\epsilon_{t}\right)\right\Vert _{2}^{2}+2\left\Vert \left|\Lambda\right|^{1/2}\varphi_{t}-\left(\phi_{t}^{(\mathrm{ref})}+\epsilon_{t}\right)\right\Vert _{2}\left\Vert \epsilon_{t}\right\Vert _{2}+\left\Vert q_{t}^{(\mathrm{ref})}\left|\epsilon_{t}\right\rangle \left\langle \epsilon_{t}\right|q_{t}^{(\mathrm{ref})}-\left|\epsilon_{t}\right\rangle \left\langle \epsilon_{t}\right|\right\Vert \nonumber \\
\leq & \beta_{t}^{2}+2\beta_{t}+\left\Vert p_{t}^{(\mathrm{ref})}\left|\epsilon_{t}\right\rangle \left\langle \epsilon_{t}\right|p_{t}^{(\mathrm{ref})}\right\Vert +2\left\Vert p_{t}^{(\mathrm{ref})}\left|\epsilon_{t}\right\rangle \left\langle \epsilon_{t}\right|\right\Vert \nonumber \\
\leq & \beta_{t}^{2}+2\beta_{t}+\frac{\widetilde{C}_{\epsilon}(t)^{2}}{\left|\Lambda\right|}+\frac{\widetilde{C}_{\epsilon}(t)}{\left|\Lambda\right|^{1/2}},\label{eq:phi-eps-dist}
\end{align}
where in the last two lines we are using (\ref{eq:beta_def}) and \lemref{ref_and_excitation}.
Next, we establish estimates on the terms (\ref{eq:density_PP}),(\ref{eq:density_PQ}),(\ref{eq:density_QQ}):\\

\noindent\noun{Term (\ref{eq:density_PP})}: By Fubini, one finds that
\[
\left\langle \varphi_{t}\right|tr{}_{x,y_{2},\ldots,y_{N}}\left[\left|\Psi_{t}\right\rangle \left\langle \Psi_{t}\right|\right]\left|\varphi_{t}\right\rangle =\left\langle \Psi_{t},p_{1}^{\varphi_{t}}\Psi_{t}\right\rangle =1-\left\langle \Psi_{t},q_{1}^{\varphi_{t}}\Psi_{t}\right\rangle 
\]
so that
\begin{eqnarray}
(\ref{eq:density_PP}) & = & \left\Vert \left|\Lambda\right|q_{t}^{(\mathrm{ref})}\left|\varphi_{t}\right\rangle \left\langle \varphi_{t}\right|tr{}_{x,y_{2},\ldots,y_{N}}\left[\left|\Psi_{t}\right\rangle \left\langle \Psi_{t}\right|\right]\left|\varphi_{t}\right\rangle \left\langle \varphi_{t}\right|q_{t}^{(\mathrm{ref})}-\left|\epsilon_{t}\right\rangle \left\langle \epsilon_{t}\right|\right\Vert \nonumber \\
 & = & \left\Vert \left(1-\left\langle \Psi_{t},q_{1}^{\varphi_{t}}\Psi_{t}\right\rangle \right)\left[\left|\Lambda\right|q_{t}^{(\mathrm{ref})}\left|\varphi_{t}\right\rangle \left\langle \varphi_{t}\right|q_{t}^{(\mathrm{ref})}-\left|\epsilon_{t}\right\rangle \left\langle \epsilon_{t}\right|\right]+\left\langle \Psi_{t},q_{1}^{\varphi_{t}}\Psi_{t}\right\rangle \left|\epsilon_{t}\right\rangle \left\langle \epsilon_{t}\right|\right\Vert \nonumber \\
 & \leq & \left\Vert \left(1-\left\langle \Psi_{t},q_{1}^{\varphi_{t}}\Psi_{t}\right\rangle \right)\left[\left|\Lambda\right|q_{t}^{(\mathrm{ref})}\left|\varphi_{t}\right\rangle \left\langle \varphi_{t}\right|q_{t}^{(\mathrm{ref})}-\left|\epsilon_{t}\right\rangle \left\langle \epsilon_{t}\right|\right]\right\Vert +\left|\left\langle \Psi_{t},q_{1}^{\varphi_{t}}\Psi_{t}\right\rangle \right|\left\Vert \epsilon_{t}\right\Vert _{2}^{2}\nonumber \\
 & \leq & 2\left(\beta_{t}^{2}+2\beta_{t}+\frac{\widetilde{C}_{\epsilon}(t)^{2}}{\left|\Lambda\right|}+\frac{\widetilde{C}_{\epsilon}(t)}{\left|\Lambda\right|^{1/2}}\right)+\frac{\alpha_{t}}{\left|\Lambda\right|}C_{\epsilon}(t)^{2},\label{eq:density_PP_est}
\end{eqnarray}
where we have used inequality (\ref{eq:alpha bound}) to get
\[
\left|1-\left\langle \Psi_{t},q_{1}^{\varphi_{t}}\Psi_{t}\right\rangle \right|\leq1+\frac{\alpha_{t}}{\left|\Lambda\right|}\leq2
\]
for $\left|\Lambda\right|,\rho\gg1$, and, furthermore, inequality (\ref{eq:phi-eps-dist}),
definition (\ref{eq:alpha_def}), and \lemref{ref_and_excitation}.\\

\noindent\noun{Term (\ref{eq:density_PQ}):}
\begin{eqnarray}
(\ref{eq:density_PQ}) & = & 2\left|\Lambda\right|\left\Vert q_{t}^{(\mathrm{ref})}\, tr{}_{x,y_{2},\ldots,y_{N}}\left[p_{1}^{\varphi_{t}}\left|\Psi_{t}\right\rangle \left\langle \Psi_{t}\right|q_{1}^{\varphi_{t}}\right]q_{t}^{(\mathrm{ref})}\right\Vert \nonumber \\
 & \leq & 2\left|\Lambda\right|\left\Vert q_{t}^{(\mathrm{ref})}\varphi_{t}\right\Vert \left\Vert q_{1}^{\varphi_{t}}\Psi_{t}\right\Vert \nonumber \\
 & = & 2\left|\Lambda\right|\left\Vert q_{t}^{(\mathrm{ref})}\left(\frac{\phi^{(\mathrm{ref})}+\epsilon_{t}}{\left|\Lambda\right|^{1/2}}+\varphi_{t}-\frac{\phi^{(\mathrm{ref})}+\epsilon_{t}}{\left|\Lambda\right|^{1/2}}\right)\right\Vert \left\Vert q_{1}^{\varphi_{t}}\Psi_{t}\right\Vert \nonumber \\
 & \leq & 2\left|\Lambda\right|\left(\left\Vert \frac{\epsilon_{t}}{\left|\Lambda\right|^{1/2}}\right\Vert +\left\Vert \varphi_{t}-\frac{\phi^{(\mathrm{ref})}+\epsilon_{t}}{\left|\Lambda\right|^{1/2}}\right\Vert \right)\sqrt{\frac{\alpha_{t}}{\left|\Lambda\right|}}\nonumber \\
 & \leq & 2\sqrt{\alpha_{t}}\left(C_{\epsilon}(t)+\beta_{t}\right)\label{eq:density_PQ_est}
\end{eqnarray}
where we have used definitions (\ref{eq:alpha_def}), (\ref{eq:beta_def})
and \lemref{ref_and_excitation}.\\

\noindent\noun{Term (\ref{eq:density_QQ}): }
\begin{eqnarray}
(\ref{eq:density_QQ}) & = & \left|\Lambda\right|\left\Vert q_{t}^{(\mathrm{ref})}\, tr{}_{x,y_{2},\ldots,y_{N}}\left[q_{1}^{\varphi_{t}}\left|\Psi_{t}\right\rangle \left\langle \Psi_{t}\right|q_{1}^{\varphi_{t}}\right]q_{t}^{(\mathrm{ref})}\right\Vert \nonumber \\
 & \leq & \left|\Lambda\right|\left\Vert q_{1}^{\varphi_{t}}\Psi_{t}\right\Vert ^{2}\nonumber \\
 & \leq & \alpha_{t},\label{eq:density_QQ_est}
\end{eqnarray}
where we have used definition (\ref{eq:alpha_def}).\\

Collecting estimates (\ref{eq:density_PP_est}), (\ref{eq:density_PQ_est}),
(\ref{eq:density_QQ_est}), and using (\ref{eq:alpha bound}) as well
as (\ref{eq:beta bound}), we find that
\begin{eqnarray}
\left\Vert \rho_{t}^{(\mathrm{micro})}-\rho_{t}^{(\mathrm{macro})}\right\Vert  & \leq & \beta_{t}^{2}+2\beta_{t}+\frac{\widetilde{C}_{\epsilon}(t)^{2}}{\left|\Lambda\right|}+\frac{\widetilde{C}_{\epsilon}(t)}{\left|\Lambda\right|^{1/2}}+\frac{\alpha_{t}}{\left|\Lambda\right|}C_{\epsilon}(t)^{2}+2\sqrt{\alpha_{t}}\left(C_{\epsilon}(t)+\beta_{t}\right)+\alpha_{t}\nonumber\\
 & \leq & C_{1}(t)\left(\alpha_{0}^{1/2}+\rho^{-1/2}+\left|\Lambda\right|^{-1/3}\right),\label{eq:rho-est}
\end{eqnarray}
for some $C_{1}\in\mathtt{Bounds}$, as well as
\begin{align}
\|X_{t}-\left\langle x\right\rangle _{t}\|+\|\dot{X}_{t}-\left\langle \frac{p}{\rho}\right\rangle _{t}\| &\leq \beta_{t}\label{eq:est-by-beta}\\
&\leq C_{\beta}(t)\left(\alpha_{0}^{1/2}+\rho^{-1/2}+\left|\Lambda\right|^{-1/3}\right),\label{eq:spread-est}
\end{align}
where inequality (\ref{eq:est-by-beta}) follows from the definition of the function $\beta_{t}$ given in (\ref{eq:beta_def}). The choice of initial conditions (\ref{eq:initial_psi}), satisfying conditions (\ref{eq:initial_spread}) and (\ref{eq:initial_gas_wf}), ensures that
\begin{eqnarray}
\alpha_{t}|_{t=0}\leq\frac{C}{\rho^{\delta}}.\label{eq:init-alpha-beta}
\end{eqnarray}

Hence, (\ref{eq:init-alpha-beta}) together with inequality (\ref{eq:rho-est}) proves claim (i), and, together with inequality (\ref{eq:spread-est}), proves claim (ii).
\end{proof}

In the rest of this section we present proofs of Lemma \ref{lem:alpha}
and Lemma \ref{lem:beta} which provide the estimates on the time derivatives of the quantities $\alpha_{t}$ and
$\beta_{t}$ defined in (\ref{eq:alpha_def}) and (\ref{eq:beta_def}).

The aim in each of these estimates is to show that either the corresponding terms can be bounded in terms of $\alpha_{t}$ and $\beta_{t}$ or that they are small if one of the parameters $|\Lambda|$ or $\rho$ is large. The main mechanisms exploited in our strategy are:
\begin{itemize}
  \item Expansion of differences for smooth functions $f$ of the form
    \[f(x)-f\left(\left\langle x\right\rangle_{t}\right)=R^{f}(x,\left\langle x\right\rangle_{t})Ê\cdot\left(x-\left\langle x\right\rangle_{t}\right)\] where $R^{f}$ denotes Taylor's remainder term. This is meant to take advantage of the estimate
    \[\left\|\left(x-\left\langle x\right\rangle_{t}\right)\Psi_{t}\right\|_{2} \leq \sqrt{\alpha_{t}}.\]
  \item Rearrangement of the arguments of scalar products, if necessary by inserting the identity $\mathbbm{1}=p_{1}^{\varphi_{t}}+q_{1}^{\varphi_{t}}$, such that the operator $q_{1}^{\varphi_{t}}$ acts directly on $\Psi_{t}$. This is meant to take advantage of the estimate
    \[\left\|q_{1}^{\varphi_{t}}\Psi_{t}\right\|_{2} \leq \frac{\sqrt{\alpha_{t}}}{|\Lambda|^{1/2}}.\]
  
  \item Whenever the Hartree wave function $\varphi_{t}$ is integrated against a function $f$ with support contained  in a volume of $O(1)$ one gains a factor $|\Lambda|^{-1/2}$, by Lemma \ref{lem:proagation}. Note that, for example,
  \[ p_{1}^{\varphi_{t}}f(x-y_{1})p_{1}^{\varphi_{t}}=(f*|\varphi_{t}|^{2})(x)p_{1}^{\varphi_{t}},\]  
  and therefore, thanks to the argument above, one gains a factor $|\Lambda|^{-1}$.  
  
\end{itemize}

\begin{proof}[\textbf{Proof of \lemref{alpha}}]
Since $\alpha_{t}$ is a smooth function of $t$, we can estimate its derivative
by
\begin{eqnarray}
\frac{d}{dt}\alpha_{t} & \leq & \left|\frac{d}{dt}\left\langle \left(x-\left\langle x\right\rangle _{t}\right)^{2}\right\rangle _{t}\right|\label{eq:x_variance}\\
 &  & +\left|\frac{d}{dt}\left\langle \left(\frac{p-\left\langle p\right\rangle _{t}}{\rho}\right)^{2}\right\rangle _{t}\right|\label{eq:p_variance}\\
 &  & +\left|\Lambda\right|\left|\frac{d}{dt}\left\langle q_{1}^{\varphi_{t}}\right\rangle _{t}\right|\label{eq:1_q}\\
 &  & +\left|\Lambda\right|^{2}\left|\frac{d}{dt}\left\langle q_{1}^{\varphi_{t}}q_{2}^{\varphi_{t}}\right\rangle _{t}\right|.\label{eq:2_q}
\end{eqnarray}
Denoting commutators by $\left[\cdot,\cdot\right]$ and anti-commutators
by $\left\{ \cdot,\cdot\right\} $, we shall use the following auxiliary
computation in the estimates of the individual terms (\ref{eq:p_variance})-(\ref{eq:2_q}):
Let $A$ be an arbitary self-adjoint operator; then the estimate 
\begin{eqnarray}
\left|\frac{d}{dt}\left\langle \left(A-\left\langle A\right\rangle _{t}\right)^{2}\right\rangle _{t}\right| & = & \left|i\left\langle \left[H,\left(A-\left\langle A\right\rangle _{t}\right)^{2}\right]\right\rangle +\left\langle \frac{d}{dt}\left(A-\left\langle A\right\rangle _{t}\right)^{2}\right\rangle _{t}\right|\nonumber \\
 & = & \left|i\left\langle \left\{ \left[H,A-\left\langle A\right\rangle _{t}\right],A-\left\langle A\right\rangle _{t}\right\} \right\rangle _{t}-2\left\langle A-\left\langle A\right\rangle _{t}\right\rangle _{t}\frac{d}{dt}\left\langle A\right\rangle _{t}\right|\nonumber \\
 & = & \left|\left\langle \left\{ \left[H,A\right],A-\left\langle A\right\rangle _{t}\right\} \right\rangle _{t}\right|\nonumber \\
 & \leq & 2\left|\left\langle \left[H,A\right]\left(A-\left\langle A\right\rangle _{t}\right)\right\rangle _{t}\right|\label{eq:helper}
\end{eqnarray}
holds true, supposing the expressions on the right-hand side are well defined (recall that $\Psi_{t}$ is normalized).\\

\noindent\noun{term }(\ref{eq:x_variance})\noun{:} Using definition
(\ref{eq:alpha_def}) we estimate
\begin{eqnarray}
(\ref{eq:x_variance}) & = & \left|\frac{d}{dt}\left\langle \left(x-\left\langle x\right\rangle _{t}\right)^{2}\right\rangle _{t}\right|\nonumber \\
 & \leq & 2\left|\left\langle \frac{p-\left\langle p\right\rangle _{t}}{\rho}\cdot\left(x-\left\langle x\right\rangle _{t}\right)\right\rangle _{t}\right|\nonumber \\
 & \leq & 2\sqrt{\left\langle \left(\frac{p-\left\langle p\right\rangle _{t}}{\rho}\right)^{2}\right\rangle _{t}\left|\left\langle \left(x-\left\langle x\right\rangle _{t}\right)^{2}\right\rangle _{t}\right|}\nonumber \\
 & \leq & C\alpha_{t}.\label{eq:q_var_final}
\end{eqnarray}
\noindent\noun{term }(\ref{eq:p_variance})\noun{: }With the help
of \noun{(\ref{eq:helper}) }we get\textbf{
\begin{eqnarray}
(\ref{eq:p_variance}) & = & \left|\frac{d}{dt}\left\langle \left(\frac{p-\left\langle p\right\rangle _{t}}{\rho}\right)^{2}\right\rangle _{t}\right|\nonumber \\
 & \leq & 2\left|\left\langle \left[H,\frac{p}{\rho}\right]\frac{p-\left\langle p\right\rangle _{t}}{\rho}\right\rangle _{t}\right|\nonumber \\
 & = & 2\left|\left\langle \left[\rho V(x)+NW(x-y_{1}),\frac{p}{\rho}\right]\frac{p-\left\langle p\right\rangle _{t}}{\rho}\right\rangle _{t}\right|\nonumber \\
 & \leq & 2\left|\left\langle \nabla V(x)\cdot\frac{p-\left\langle p\right\rangle _{t}}{\rho}\right\rangle _{t}\right|\label{eq:grad_V}\\
 &  & +2\left|\Lambda\right|\left|\left\langle \nabla W(x-y_{1})\cdot\frac{p-\left\langle p\right\rangle _{t}}{\rho}\right\rangle _{t}\right|.\label{eq:grad_W}
\end{eqnarray}
}We expand $\nabla V(x)$ according to
\[
\nabla V(x)=\nabla V(\left\langle x\right\rangle _{t})+R^{\nabla V}(x,\left\langle x\right\rangle _{t})\,(x-\left\langle x\right\rangle _{t}),
\]
where $R^{\nabla V}$ denotes Taylor's remainder term, and, using (\ref{eq:alpha_def}),
we obtain
\begin{eqnarray}
(\ref{eq:grad_V}) & \leq & 2\left|\nabla V(\left\langle x\right\rangle _{t})\cdot\left\langle \frac{p-\left\langle p\right\rangle _{t}}{\rho}\right\rangle _{t}\right|+2\left|\left\langle \left(R^{\nabla V}(x,\left\langle x\right\rangle _{t})\,(x-\left\langle x\right\rangle _{t})\right)\cdot\frac{p-\left\langle p\right\rangle _{t}}{\rho}\right\rangle _{t}\right|\nonumber \\
 & \leq & 2\left\Vert R^{\nabla V}\right\Vert _{\infty}\left\Vert \left(x-\left\langle x\right\rangle _{t}\right)\Psi_{t}\right\Vert _{2}\left\Vert \frac{p-\left\langle p\right\rangle _{t}}{\rho}\Psi_{t}\right\Vert _{2}\nonumber \\
 & \leq & C\alpha_{t}\label{eq:p_var_1}
\end{eqnarray}
The estimate on term (\ref{eq:grad_W}), which depends on $W$, is
more involved. It is convenient to split it as follows, inserting the identity $\mathbbm{1}=p_{1}^{\varphi_{t}}+q_{1}^{\varphi_{t}}$,
\begin{eqnarray}
(\ref{eq:grad_W}) & \leq & 2\left|\Lambda\right|\left|\left\langle p_{1}^{\varphi_{t}}\nabla W(x-y_{1})p_{1}^{\varphi_{t}}\cdot\frac{p-\left\langle p\right\rangle _{t}}{\rho}\right\rangle _{t}\right|\label{eq:pp}\\
 &  & +2\left|\Lambda\right|\left|\left\langle \left(p_{1}^{\varphi_{t}}+q_{1}^{\varphi_{t}}\right)\nabla W(x-y_{1})q_{1}^{\varphi_{t}}\cdot\frac{p-\left\langle p\right\rangle _{t}}{\rho}\right\rangle _{t}\right|\label{eq:p+q_q}\\
 &  & +2\left|\Lambda\right|\left|\left\langle q_{1}^{\varphi_{t}}\nabla W(x-y_{1})p_{1}^{\varphi_{t}}\cdot\frac{p-\left\langle p\right\rangle _{t}}{\rho}\right\rangle _{t}\right|.\label{eq:q_p}
\end{eqnarray}

In order to treat the term (\ref{eq:pp}), we expand $\nabla W$ according
\[
\nabla W(x-y)=\nabla W(\left\langle x\right\rangle _{t}-y)+R^{\nabla W}(x,y,\left\langle x\right\rangle _{t})\,(x-\left\langle x\right\rangle _{t}),
\]
where $R^{\nabla W}$ denotes Taylor's remainder term, and find
\begin{eqnarray}
(\ref{eq:pp}) & = & 2\left|\Lambda\right|\left|\left\langle \nabla W*|\varphi_{t}|^{2}(x)p_{1}^{\varphi_{t}}\cdot\frac{p-\left\langle p\right\rangle _{t}}{\rho}\right\rangle _{t}\right|\nonumber \\
 & = & 2\left|\Lambda\right|\left|\int dy\,\nabla W(\left\langle x\right\rangle _{t}-y)|\varphi_{t}|^{2}(y)\cdot\left\langle p_{1}^{\varphi_{t}}\frac{p-\left\langle p\right\rangle _{t}}{\rho}\right\rangle _{t}\right|\label{eq:pp_1}\\
 &  & +2\left|\Lambda\right|\left|\left\langle \int dy\,\left[R^{\nabla W}(x,y,\left\langle x\right\rangle _{t})\,(x-\left\langle x\right\rangle _{t})\right]|\varphi_{t}|^{2}(y)p_{1}^{\varphi_{t}}\cdot\frac{p-\left\langle p\right\rangle _{t}}{\rho}\right\rangle _{t}\right|.\label{eq:pp_2}
\end{eqnarray}
To estimate term (\ref{eq:pp_1}) we use definition (\ref{eq:alpha_def})
to get an auxiliary bound
\begin{eqnarray*}
\left|\left\langle p_{1}^{\varphi_{t}}\frac{p-\left\langle p\right\rangle _{t}}{\rho}\right\rangle _{t}\right| & = & \left|\left\langle \frac{p-\left\langle p\right\rangle _{t}}{\rho}\right\rangle _{t}-\left\langle q_{1}^{\varphi_{t}}\frac{p-\left\langle p\right\rangle _{t}}{\rho}\right\rangle _{t}\right|=\left|\left\langle q_{1}^{\varphi_{t}}\frac{p-\left\langle p\right\rangle _{t}}{\rho}\right\rangle _{t}\right|\\
 & \leq & \left\Vert q_{1}^{\varphi_{t}}\Psi_{t}\right\Vert _{2}\left\Vert \frac{p-\left\langle p\right\rangle _{t}}{\rho}\Psi_{t}\right\Vert _{2}\\
 & \leq & \frac{1}{\left|\Lambda\right|^{1/2}}\sqrt{\left|\Lambda\right|\left\langle q_{1}^{\varphi}\right\rangle _{t}\left\langle \left(\frac{p-\left\langle p\right\rangle _{t}}{\rho}\right)^{2}\right\rangle _{t}}\\
 & \leq & \frac{\alpha_{t}}{\left|\Lambda\right|^{1/2}},
\end{eqnarray*}
which implies
\begin{eqnarray}
(\ref{eq:pp_1}) & \leq & 2\left|\Lambda\right|\left|\int dy\,\nabla W(\left\langle x\right\rangle _{t}-y)|\varphi_{t}|^{2}(y)\right|\frac{\alpha_{t}}{\left|\Lambda\right|^{1/2}}\nonumber \\
 & \leq & 4\left|\Lambda\right|\left(\left\Vert \nabla W\right\Vert _{\infty}+\left\Vert \nabla W\right\Vert _{1}\right)\left(\left\Vert e^{i\Delta t}\phi^{(0)}\right\Vert _{\infty}^{2}+\left\Vert \varphi_{t}-e^{i\Delta t}\phi^{(0)}\right\Vert _{2}^{2}\right)\frac{\alpha_{t}}{\left|\Lambda\right|^{1/2}}\nonumber \\
 & \leq & C\, C_{\mathrm{prop}}(t)^{2}\frac{\alpha_{t}}{\left|\Lambda\right|^{1/2}}.\label{eq:pp_1_est}
\end{eqnarray}
Here we have inserted the identity
\[
\varphi_{t}=e^{i\Delta t}\phi^{(0)}+\varphi_{t}-e^{i\Delta t}\phi^{(0)}
\]
and used \lemref{proagation}, which implies that
\[
\left\Vert e^{i\Delta t}\phi^{(0)}\right\Vert _{\infty}+\left\Vert \varphi_{t}-e^{i\Delta t}\phi^{(0)}\right\Vert _{2}\leq C_{\mathrm{prop}}(t)\left|\Lambda\right|^{-1/2}.
\]
To control term (\ref{eq:pp_2}) we make use of definition (\ref{eq:alpha_def})
and \lemref{proagation} again and find that
\begin{eqnarray}
(\ref{eq:pp_2}) & \leq & 2\left|\Lambda\right|\left\Vert \int dy\, R^{\nabla W}(x,y,\left\langle x\right\rangle _{t})|\varphi_{t}|^{2}(y)\right\Vert \left\Vert (x-\left\langle x\right\rangle _{t})\Psi_{t}\right\Vert _{2}\left\Vert p_{1}^{\varphi_{t}}\cdot\frac{p-\left\langle p\right\rangle _{t}}{\rho}\Psi_{t}\right\Vert _{2}\nonumber \\
 & \leq & 2\left|\Lambda\right|\left(\sup_{x\in\mathbb{R}^{3}}\left|\int dy\, R^{\nabla W}(x,y,\left\langle x\right\rangle _{t})\left|\varphi_{t}\right|{}^{2}(y)\right|\right)\sqrt{\left\langle (x-\left\langle x\right\rangle _{t})^{2}\right\rangle _{t}\left\langle \left(\frac{p-\left\langle p\right\rangle _{t}}{\rho}\right)^{2}\right\rangle _{t}}\nonumber \\
 & \leq & 4\left|\Lambda\right|\sup_{x,z\in\mathbb{R}^{3}}\left(\left\Vert R^{\nabla W}(x,\cdot,z)\right\Vert _{\infty}+\left\Vert R^{\nabla W}(x,\cdot,z)\right\Vert _{1}\right)\left(\left\Vert e^{i\Delta t}\phi^{(0)}\right\Vert _{\infty}^{2}+\left\Vert \varphi_{t}-e^{i\Delta t}\phi^{(0)}\right\Vert _{2}^{2}\right)\alpha_{t}\nonumber \\
 & \leq & C\, C_{\mathrm{prop}}(t)^{2}\alpha_{t}.\label{eq:pp_2_est}
\end{eqnarray}
Estimates (\ref{eq:pp_1_est}) and (\ref{eq:pp_2_est}) imply the
upper bound
\begin{eqnarray}
(\ref{eq:pp}) & = & (\ref{eq:pp_1})+(\ref{eq:pp_2})\nonumber \\
 & \leq & C\, C_{\mathrm{prop}}(t)^{2}\alpha_{t}.\label{eq:p_var_2}
\end{eqnarray}
  The term (\ref{eq:q_p}) %, being proportional to $q_{1}^{\varphi_{t}}\ldots p_{1}^{\varphi_{t}}$,
is estimated similarly:
\begin{eqnarray}
(\ref{eq:q_p}) & = & 2\left|\Lambda\right|\left|\left\langle q_{1}^{\varphi_{t}}\nabla W(x-y_{1})p_{1}^{\varphi_{t}}\cdot\frac{p-\left\langle p\right\rangle _{t}}{\rho}\right\rangle _{t}\right|\nonumber \\
 & \leq & 2\left|\Lambda\right|\left\Vert q_{1}^{\varphi_{t}}\Psi_{t}\right\Vert _{2}\left\Vert \nabla W(x-y_{1})p_{1}^{\varphi_{t}}\right\Vert \left\Vert \frac{p-\left\langle p\right\rangle _{t}}{\rho}\Psi_{t}\right\Vert _{2}\nonumber \\
 & \leq & 2\left|\Lambda\right|\left|\Lambda\right|^{-1/2}\sqrt{\left|\Lambda\right|\left\langle q_{1}^{\varphi_{t}}\right\rangle _{t}}\left\Vert \nabla W(x-y_{1})p_{1}^{\varphi_{t}}\right\Vert \sqrt{\left\langle \left(\frac{p-\left\langle p\right\rangle _{t}}{\rho}\right)^{2}\right\rangle _{t}}\nonumber \\
 & \leq & C\left|\Lambda\right|\left|\Lambda\right|^{-1/2}\sqrt{\alpha_{t}}\sup_{x\in\mathbb{R}^{3}}\left(\left\Vert \nabla W(x-\cdot)\right\Vert _{\infty}+\left\Vert \nabla W(x-\cdot)\right\Vert _{2}\right)\left(\left\Vert e^{i\Delta t}\phi^{(0)}\right\Vert _{\infty}+\left\Vert \varphi_{t}-e^{i\Delta t}\phi^{(0)}\right\Vert _{2}\right)\sqrt{\alpha_{t}}\nonumber \\
 & \leq & C\left|\Lambda\right|\left|\Lambda\right|^{-1/2}\sqrt{\alpha_{t}}\, C_{\mathrm{prop}}(t)\left|\Lambda\right|^{-1/2}\sqrt{\alpha_{t}}\nonumber \\
 & \leq & C\, C_{\mathrm{prop}}(t)\alpha_{t},\label{eq:p_var_3}
\end{eqnarray}
where we have used definition (\ref{eq:alpha_def}), \lemref{proagation} and the following straightforward
estimate
\begin{eqnarray*}
\left\Vert \nabla W(x-y_{1})p_{1}^{\varphi_{t}}\right\Vert ^{2} & = & \sup_{\left\Vert \chi\right\Vert _{2}=1}\left\langle \chi,p_{1}^{\varphi_{t}}\left[\nabla W(x-y_{1})\right]^{2}p_{1}^{\varphi_{t}}\chi\right\rangle \\
 & = & \sup_{\left\Vert \chi\right\Vert _{2}=1}\left\langle \chi,p_{1}^{\varphi_{t}}\int dy\,\left[\nabla W(x-y)\right]^{2}\left|\varphi_{t}\right|^{2}(y)p_{1}^{\varphi_{t}}\chi\right\rangle \\
 & \leq & \sup_{x\in\mathbb{R}^{3}}\int dy\,\left[\nabla W(x-y)\right]^{2}\left|\varphi_{t}\right|^{2}(y).
\end{eqnarray*}
Our estimate on term (\ref{eq:p+q_q}) %being proportional to $\left(p_{1}^{\varphi_{t}}+q_{1}^{\varphi_{t}}\right)\ldots q_{1}^{\varphi_{t}}$, 
is more subtle. Due to symmetry in
the gas degrees of freedom, definition (\ref{eq:alpha_def}) of $\alpha_{t}$, and
(\ref{eq:rho_def}) one finds that
\begin{eqnarray}
(\ref{eq:p+q_q}) & = & 2\left|\Lambda\right|\left|\left\langle \nabla W(x-y_{1})q_{1}^{\varphi_{t}}\cdot\frac{p-\left\langle p\right\rangle _{t}}{\rho}\right\rangle _{t}\right|\nonumber \\
 & = & 2\left|\Lambda\right|\left|\left\langle \frac{1}{N}\sum_{k=1}^{N}\nabla W(x-y_{k})q_{k}^{\varphi_{t}}\cdot\frac{p-\left\langle p\right\rangle _{t}}{\rho}\right\rangle _{t}\right|\nonumber \\
 & \leq & 2\rho^{-1}\left\Vert \sum_{k=1}^{N}q_{k}^{\varphi_{t}}\nabla W(x-y_{k})\Psi_{t}\right\Vert _{2}\left\Vert \frac{p-\left\langle p\right\rangle _{t}}{\rho}\Psi_{t}\right\Vert _{2}\nonumber \\
 & \leq & 2\rho^{-1}\left(\left\langle \sum_{k,j=1}^{N}\nabla W(x-y_{k})q_{k}^{\varphi_{t}}q_{j}^{\varphi_{t}}\cdot\nabla W(x-y_{j})\right\rangle _{t}\right)^{1/2}\sqrt{\alpha_{t}}\nonumber \\
 & \leq & \rho^{-2}\left\langle \sum_{k,j=1}^{N}\nabla W(x-y_{k})q_{k}^{\varphi_{t}}q_{j}^{\varphi_{t}}\cdot\nabla W(x-y_{j})\right\rangle _{t}+\alpha_{t}.\label{eq:p_var_3_1}
\end{eqnarray}
In the next step, the sum is split into a sum over diagonal
terms and one over cross terms, i.e.,
\begin{eqnarray}
 &  & \rho^{-2}\left\langle \sum_{k,j=1}^{N}\nabla W(x-y_{k})q_{k}^{\varphi_{t}}q_{j}^{\varphi_{t}}\cdot\nabla W(x-y_{j})\right\rangle _{t}\nonumber \\
 & = & \rho^{-2}N\left\langle \nabla W(x-y_{1})q_{1}^{\varphi_{t}}\cdot\nabla W(x-y_{1})\right\rangle _{t}\label{eq:diagonal}\\
 &  & +\rho^{-2}N(N-1)\left\langle \nabla W(x-y_{1})q_{1}^{\varphi_{t}}q_{2}^{\varphi_{t}}\cdot\nabla W(x-y_{2})\right\rangle _{t}.\label{eq:sym_trick}
\end{eqnarray}
The diagonal term (\ref{eq:diagonal}) can be bounded by
\begin{eqnarray}
(\ref{eq:diagonal}) & \leq & \rho^{-2}N\left\Vert \nabla W(x-y_{1})\Psi_{t}\right\Vert _{2}\left\Vert q_{1}^{\varphi_{t}}\right\Vert \left\Vert \nabla W(x-y_{1})\Psi_{t}\right\Vert _{2}\nonumber \\
 & = & \rho^{-2}N\left\langle \left(\nabla W(x-y_{1})\right)^{2}\right\rangle _{t}\nonumber \\
 & = & \rho^{-2}N\left\langle p_{1}^{\varphi_{t}}\left(\nabla W(x-y_{1})\right)^{2}p_{1}^{\varphi_{t}}\right\rangle _{t}\label{eq:gradW_pp}\\
 &  & +2\rho^{-2}N\,\Im\left\langle p_{1}^{\varphi_{t}}\left(\nabla W(x-y_{1})\right)^{2}q_{1}^{\varphi_{t}}\right\rangle _{t}\label{eq:gradW_pq}\\
 &  & +\rho^{-2}N\left\langle q_{1}^{\varphi_{t}}\left(\nabla W(x-y_{1})\right)^{2}q_{1}^{\varphi_{t}}\right\rangle _{t},\label{eq:gradW_qq}
\end{eqnarray}
where we have again inserted the identity $\mathbbm{1}=p_{1}^{\varphi_{t}}+q_{1}^{\varphi_{t}}$.
Using definition (\ref{eq:alpha_def}) of $\alpha_{t}$, as well as \lemref{proagation},
we compute
\begin{eqnarray}
(\ref{eq:gradW_pp}) & \leq & C\rho^{-2}N\sup_{x\in\mathbb{R}^{3}}\left(\left\Vert \nabla W(x-\cdot)\right\Vert _{\infty}^{2}+\left\Vert \nabla W(x-\cdot)\right\Vert _{2}^{2}\right)\left(\left\Vert e^{i\Delta t}\phi^{(0)}\right\Vert _{\infty}^{2}+\left\Vert \varphi_{t}-e^{i\Delta t}\phi^{(0)}\right\Vert _{2}^{2}\right)\nonumber \\
 & \leq & C\, C_{\mathrm{prop}}(t)^{2}\rho^{-1},\label{eq:p_var_4}
\end{eqnarray}
and
\begin{eqnarray}
(\ref{eq:gradW_pq}) & = & 2\rho^{-2}N\,\Im\left\langle p_{1}^{\varphi_{t}}\left(\nabla W(x-y_{1})\right)^{2}q_{1}^{\varphi_{t}}\right\rangle _{t}\nonumber \\
 & \leq & 2\rho^{-2}N\left\Vert \left(\nabla W(x-y_{1})\right)^{2}p_{1}^{\varphi_{t}}\Psi_{t}\right\Vert _{2}\left\Vert q_{1}^{\varphi_{t}}\Psi_{t}\right\Vert _{2}\nonumber \\
 & \leq & C\rho^{-2}N\,\sup_{x\in\mathbb{R}^{3}}\left(\left\Vert \left(\nabla W(x-\cdot)\right)^{2}\right\Vert _{\infty}+\left\Vert \left(\nabla W(x-\cdot)\right)^{2}\right\Vert _{2}\right)\times\nonumber \\
 &  & \times\left(\left\Vert e^{i\Delta t}\phi^{(0)}\right\Vert _{\infty}+\left\Vert \varphi_{t}-e^{i\Delta t}\phi^{(0)}\right\Vert _{2}\right)\left|\Lambda\right|^{-1/2}\sqrt{\alpha_{t}}\nonumber \\
 & \leq & C\,\rho^{-2}NC_{\mathrm{prop}}(t)\left|\Lambda\right|^{-1/2}\left|\Lambda\right|^{-1/2}\sqrt{\alpha_{t}}\nonumber \\
 & \leq & C\,\rho^{-1}C_{\mathrm{prop}}(t)\sqrt{\alpha_{t}}\nonumber \\
 & \leq & C\, C_{\mathrm{prop}}(t)\left(\alpha_{t}+\rho^{-2}\right),\label{eq:p_var_5}
\end{eqnarray}
and finally
\begin{eqnarray}
(\ref{eq:gradW_qq}) & = & \rho^{-2}N\left\langle q_{1}^{\varphi_{t}}\left(\nabla W(x-y_{1})\right)^{2}q_{1}^{\varphi_{t}}\right\rangle _{t}\nonumber \\
 & \leq & \rho^{-2}N\left|\Lambda\right|^{-1/2}\sqrt{\alpha_{t}}\left\Vert \left(\nabla W\right)^{2}\right\Vert _{\infty}\left|\Lambda\right|^{-1/2}\sqrt{\alpha_{t}}\nonumber \\
 & \leq & C\rho^{-1}\alpha_{t}.\label{eq:p_var_6}
\end{eqnarray}
To estimate the cross terms in (\ref{eq:p_var_3_1}) we again insert the identity $\mathbbm{1}=p_{1}^{\varphi_{t}}+q_{1}^{\varphi_{t}}$ on the
right and the identity $\mathbbm{1}=p_{2}^{\varphi_{t}}+q_{2}^{\varphi_{t}}$ on the left
of the expectation value (\ref{eq:sym_trick}). This yields
\begin{eqnarray}
(\ref{eq:sym_trick}) & = & \rho^{-2}N(N-1)\left|\left\langle (p_{1}^{\varphi_{t}}+q_{1}^{\varphi_{t}})q_{2}^{\varphi_{t}}\nabla W(x-y_{1})\nabla W(x-y_{2})q_{1}^{\varphi_{t}}(p_{2}^{\varphi_{t}}+q_{2}^{\varphi_{t}})\right\rangle _{t}\right|\nonumber \\
 & \leq & \rho^{-2}N(N-1)\left|\left\langle p_{1}^{\varphi_{t}}q_{2}^{\varphi_{t}}\nabla W(x-y_{1})\nabla W(x-y_{2})q_{1}^{\varphi_{t}}p_{2}^{\varphi_{t}}\right\rangle _{t}\right|\label{eq:cross_pqqp}\\
 &  & +2\rho^{-2}N(N-1)\left|\left\langle p_{1}^{\varphi_{t}}q_{2}^{\varphi_{t}}\nabla W(x-y_{1})\nabla W(x-y_{2})q_{1}^{\varphi_{t}}q_{2}^{\varphi_{t}}\right\rangle _{t}\right|\label{eq:cross_pqqq}\\
 &  & +\rho^{-2}N(N-1)\left|\left\langle q_{1}^{\varphi_{t}}q_{2}^{\varphi_{t}}\nabla W(x-y_{1})\nabla W(x-y_{2})q_{1}^{\varphi_{t}}q_{2}^{\varphi_{t}}\right\rangle _{t}\right|.\label{eq:cross_qqqq}
\end{eqnarray}
Each term on the right-hand side can be estimated using definition (\ref{eq:alpha_def})
and \lemref{proagation}:
\begin{eqnarray}
(\ref{eq:cross_pqqp}) & \leq & \rho^{-2}N^{2}\left|\left\langle q_{2}^{\varphi_{t}}p_{1}^{\varphi_{t}}\nabla W(x-y_{1})\nabla W(x-y_{2})p_{2}^{\varphi_{t}}q_{1}^{\varphi_{t}}\right\rangle _{t}\right|\nonumber \\
 & \leq & \rho^{-2}N^{2}\left\Vert q_{2}^{\varphi_{t}}\Psi_{t}\right\Vert _{2}\left\Vert \nabla W(x-y_{1})p_{1}^{\varphi_{t}}\right\Vert \left\Vert \nabla W(x-y_{2})p_{2}^{\varphi_{t}}\right\Vert \left\Vert q_{1}^{\varphi_{t}}\Psi_{t}\right\Vert _{2}\nonumber \\
 & \leq & C\rho^{-2}N^{2}\left|\Lambda\right|^{-1/2}\sqrt{\alpha_{t}}\times\nonumber \\
 &  & \times\left[\sup_{x\in\mathbb{R}^{3}}\left(\left\Vert \nabla W(x-\cdot)\right\Vert _{\infty}+\left\Vert \nabla W(x-\cdot)\right\Vert _{2}\right)\left(\left\Vert e^{i\Delta t}\phi^{(0)}\right\Vert _{\infty}+\left\Vert \varphi_{t}-e^{i\Delta t}\phi^{(0)}\right\Vert _{2}\right)\right]^{2}\times\nonumber \\
 &  & \times\left|\Lambda\right|^{-1/2}\sqrt{\alpha_{t}}\nonumber \\
 & \leq & C\, C_{\mathrm{prop}}(t)^{2}\alpha_{t},\label{eq:p_var_7}
\end{eqnarray}
and
\begin{eqnarray}
(\ref{eq:cross_pqqq}) & \leq & 2\rho^{-2}N^{2}\left|\left\langle q_{2}^{\varphi_{t}}p_{1}^{\varphi_{t}}\nabla W(x-y_{1})\nabla W(x-y_{2})q_{1}^{\varphi_{t}}q_{2}^{\varphi_{t}}\right\rangle _{t}\right|\nonumber \\
 & \leq & 2\rho^{-2}N^{2}\left\Vert q_{2}^{\varphi_{t}}\Psi_{t}\right\Vert _{2}\left\Vert p_{1}^{\varphi_{t}}\nabla W(x-y_{1})\right\Vert \left\Vert \nabla W\right\Vert _{\infty}\left\Vert q_{1}^{\varphi_{t}}q_{2}^{\varphi_{t}}\Psi_{t}\right\Vert _{2}\nonumber \\
 & \leq & C\rho N^{2}\left|\Lambda\right|^{-1/2}\sqrt{\alpha_{t}}\times\nonumber \\
 &  & \times\sup_{x\in\mathbb{R}^{3}}\left(\left\Vert \nabla W(x-\cdot)\right\Vert _{\infty}+\left\Vert \nabla W(x-\cdot)\right\Vert _{2}\right)\left(\left\Vert e^{i\Delta t}\phi^{(0)}\right\Vert _{\infty}+\left\Vert \varphi_{t}-e^{i\Delta t}\phi^{(0)}\right\Vert _{2}\right)\nonumber \\
 &  & \times\left|\Lambda\right|^{-1}\sqrt{\alpha_{t}}\nonumber \\
 & \leq & C\, C_{\mathrm{prop}}(t)\alpha_{t},\label{eq:p_var_8}
\end{eqnarray}
finally
\begin{eqnarray}
(\ref{eq:cross_qqqq}) & \leq & \rho^{-2}N^{2}\left|\left\langle q_{1}^{\varphi_{t}}q_{2}^{\varphi_{t}}\nabla W(x-y_{1})\nabla W(x-y_{2})q_{1}^{\varphi_{t}}q_{2}^{\varphi_{t}}\right\rangle _{t}\right|\nonumber \\
 & \leq & \rho^{-2}N^{2}\left\Vert q_{1}^{\varphi_{t}}q_{2}^{\varphi_{t}}\Psi_{t}\right\Vert _{2}\left\Vert \nabla W\right\Vert _{\infty}^{2}\left\Vert q_{1}^{\varphi_{t}}q_{2}^{\varphi_{t}}\Psi_{t}\right\Vert _{2}\nonumber \\
 & \leq & C\rho^{-2}N^{2}\left|\Lambda\right|^{-1}\sqrt{\alpha_{t}}\left|\Lambda\right|^{-1}\sqrt{\alpha_{t}}\nonumber \\
 & \leq & C\alpha_{t}.\label{eq:p_var_9}
\end{eqnarray}
Combination of estimates (\ref{eq:p_var_1}), (\ref{eq:p_var_2})-(\ref{eq:p_var_3_1}), (\ref{eq:p_var_4})-(\ref{eq:p_var_6}), (\ref{eq:p_var_7})-(\ref{eq:p_var_9})
implies the existence of a $C_{\mathrm{Var}(p)}\in\mathtt{Bounds}$
with
\begin{equation}
(\ref{eq:p_variance})\leq C_{\mathrm{Var}(p)}(t)\left(\alpha_{t}+\rho^{-1}\right).\label{eq:p_var_final}
\end{equation}
With the help of definition (\ref{eq:alpha_def}) of the function $\alpha_{t}$ and \lemref{proagation}
we treat the two remaining terms as follows.\\

\noindent\noun{term }(\ref{eq:1_q})\noun{: }Recall that, by (\ref{eq:intermediate}),
the equation of motion
\begin{equation}
\frac{d}{dt}q_{1}^{\varphi_{t}}=-\frac{d}{dt}p_{1}^{\varphi_{t}}=-i\left[-\Delta_{y_{1}}+W(\left\langle x\right\rangle _{t}-y_{1}),p_{1}^{\varphi_{t}}\right]=i\left[-\Delta_{y_{1}}+W(\left\langle x\right\rangle _{t}-y_{1}),q_{1}^{\varphi_{t}}\right]
\end{equation}
holds. Using the expansion
\[
W(x-y)=W(\left\langle x\right\rangle _{t}-y)+R^{W}(x,y,\left\langle x\right\rangle _{t})\,(x-\left\langle x\right\rangle _{t}),
\]
where $R^{W}(x,y)$ denotes Taylor's remainder, we find
\begin{eqnarray}
(\ref{eq:1_q}) & = & \left|\Lambda\right|\left|\frac{d}{dt}\left\langle q_{1}^{\varphi_{t}}\right\rangle _{t}\right|\nonumber \\
 & = & \left|\Lambda\right|\left|\left\langle \left[W(x-y_{1})-W(\left\langle x\right\rangle _{t}-y_{1}),q_{1}^{\varphi_{t}}\right]\right\rangle _{t}\right|\nonumber \\
 & = & 2\left|\Lambda\right|\left|\Im\left\langle \left(p_{1}^{\varphi_{t}}+q_{1}^{\varphi_{t}}\right)\left(W(x-y_{1})-W(\left\langle x\right\rangle _{t}-y_{1})\right)q_{1}^{\varphi_{t}}\right\rangle _{t}\right|\label{eq:1_q_im_part}\\
 & = & 2\left|\Lambda\right|\left|\Im\left\langle p_{1}^{\varphi_{t}}\left(W(x-y_{1})-W(\left\langle x\right\rangle _{t}-y_{1})\right)q_{1}^{\varphi_{t}}\right\rangle _{t}\right|\label{eq:1_q_im_part2}\\
 & = & 2\left|\Lambda\right|\left|\Im\left\langle p_{1}^{\varphi_{t}}R^{W}(x,y_{1},\left\langle x\right\rangle _{t})\,\left(x-\left\langle x\right\rangle _{t}\right)q_{1}^{\varphi_{t}}\right\rangle _{t}\right|\nonumber \\
 & \leq & 2\left|\Lambda\right|\left\Vert \left(x-\left\langle x\right\rangle _{t}\right)\Psi_{t}\right\Vert _{2}\left\Vert R^{W}(x,y_{1},\left\langle x\right\rangle _{t})p_{1}^{\varphi_{t}}\right\Vert \left\Vert q_{1}^{\varphi_{t}}\Psi_{t}\right\Vert _{2}\nonumber \\
 & \leq & C\left|\Lambda\right|\sqrt{\alpha_{t}}\sup_{x,z\in\mathbb{R}^{3}}\left(\left\Vert R^{W}(x,\cdot,z)\right\Vert _{\infty}+\left\Vert R^{W}(x,\cdot,z)\right\Vert _{2}\right)\left(\left\Vert e^{i\Delta t}\phi^{(0)}\right\Vert _{\infty}+\left\Vert \varphi_{t}-e^{i\Delta t}\phi^{(0)}\right\Vert _{2}\right)\left|\Lambda\right|^{-1/2}\sqrt{\alpha_{t}}\nonumber \\
 & \leq & C\, C_{\mathrm{prop}}(t)\alpha_{t},\label{eq:q1_final}
\end{eqnarray}
where, in the step from (\ref{eq:1_q_im_part}) to (\ref{eq:1_q_im_part2}),
we have used that
\[
q_{1}^{\varphi_{t}}\left(W(x-y_{1})-W(\left\langle x\right\rangle _{t}-y_{1})\right)q_{1}^{\varphi_{t}}
\]
is self-adjoint, so that its expectation value has a vanishing imaginary part.
\\

\noindent\noun{term }(\ref{eq:2_q})\noun{:} 
\begin{eqnarray}
(\ref{eq:2_q}) & = & \left|\Lambda\right|^{2}\left|\frac{d}{dt}\left\langle q_{1}^{\varphi_{t}}q_{2}^{\varphi_{t}}\right\rangle _{t}\right|\nonumber \\
 & = & \left|\Lambda\right|^{2}\left|\left\langle \left[W(x-y_{1})-W(\left\langle x\right\rangle _{t}-y_{1})+W(x-y_{2})-W(\left\langle x\right\rangle _{t}-y_{2}),q_{1}^{\varphi_{t}}q_{2}^{\varphi_{t}}\right]\right\rangle _{t}\right|\nonumber \\
 & = & 4\left|\Lambda\right|^{2}\left|\Im\left\langle q_{2}^{\varphi_{t}}\left(W(x-y_{1})-W(\left\langle x\right\rangle _{t}-y_{1})\right)q_{1}^{\varphi_{t}}q_{2}^{\varphi_{t}}\right\rangle _{t}\right|\label{eq:2_q_1}\\
 & = & 4\left|\Lambda\right|^{2}\left|\Im\left\langle (p_{1}^{\varphi_{t}}+q_{1}^{\varphi_{t}})q_{2}^{\varphi_{t}}\left(W(x-y_{1})-W(\left\langle x\right\rangle _{t}-y_{1})\right)q_{1}^{\varphi_{t}}q_{2}^{\varphi_{t}}\right\rangle _{t}\right|\label{eq:2_q_3}\\
 & = & 4\left|\Lambda\right|^{2}\left|\Im\left\langle p_{1}^{\varphi_{t}}q_{2}^{\varphi_{t}}\left(W(x-y_{1})-W(\left\langle x\right\rangle _{t}-y_{1})\right)q_{1}^{\varphi_{t}}q_{2}^{\varphi_{t}}\right\rangle _{t}\right|\label{eq:2_q_2}\\
 & \leq & 4\left|\Lambda\right|^{2}\left(\left|\left\langle p_{1}^{\varphi_{t}}q_{2}^{\varphi_{t}}W(x-y_{1})q_{1}^{\varphi_{t}}q_{2}^{\varphi_{t}}\right\rangle \right|+\left|\left\langle p_{1}^{\varphi_{t}}q_{2}^{\varphi_{t}}W(\left\langle x\right\rangle _{t}-y_{1})q_{1}^{\varphi_{t}}q_{2}^{\varphi_{t}}\right\rangle \right|\right)\nonumber \\
 & \leq & 4\left|\Lambda\right|^{2}\left\Vert q_{2}^{\varphi_{t}}\Psi_{t}\right\Vert _{2}\left(\left\Vert W(x-y_{1})p_{1}^{\varphi_{t}}\right\Vert +\left\Vert W(\left\langle x\right\rangle _{t}-y_{1})p_{1}^{\varphi_{t}}\right\Vert \right)\left\Vert q_{1}^{\varphi_{t}}q_{2}^{\varphi_{t}}\Psi_{t}\right\Vert _{2}\nonumber \\
 & \leq & C\left|\Lambda\right|\left|\Lambda\right|^{-1/2}\sqrt{\alpha_{t}}\sup_{x\in\mathbb{R}^{3}}\left(\left\Vert W(x-\cdot)\right\Vert _{\infty}+\left\Vert W(x-\cdot)\right\Vert _{2}\right)\times\nonumber \\
 &  & \times\left(\left\Vert e^{i\Delta t}\phi^{(0)}\right\Vert _{\infty}+\left\Vert \varphi_{t}-e^{i\Delta t}\phi^{(0)}\right\Vert _{2}\right)\left|\Lambda\right|^{-1}\sqrt{\alpha_{t}}\nonumber \\
 & \leq & C\, C_{\mathrm{prop}}(t)\alpha_{t}.\label{eq:q2_final}
\end{eqnarray}
Note that (\ref{eq:2_q_1}) holds, because every other operator
inside the expectation value commutes with $q_{2}^{\varphi_{t}}$.
In the step from (\ref{eq:2_q_3}) to (\ref{eq:2_q_2}) we have used
that the operator
\[
q_{1}^{\varphi_{t}}q_{2}^{\varphi_{t}}\left(W(x-y_{1})-W(\left\langle x\right\rangle _{t}-y_{1})\right)q_{1}^{\varphi_{t}}q_{2}^{\varphi_{t}}
\]
is self-adjoint, and hence the imaginary part of its expectation value
vanishes.\\

Upon collecting estimates (\ref{eq:q_var_final}), (\ref{eq:p_var_final}),
(\ref{eq:q1_final}), (\ref{eq:q2_final}), we find $C_{\alpha}^{(1)},C_{\alpha}^{(2)}\in\mathtt{Bounds}$
such that
\[
\frac{d}{dt}\alpha_{t}\leq C_{\alpha}^{(1)}(t)\alpha_{t}+C_{\alpha}^{(2)}(t)\rho^{-1},
\]
which concludes the proof.
\end{proof}

Finally we prove the bound on the time derivative of $\beta_{t}$. The aim is to show that either the corresponding terms can be estimated in terms of the quantities $\alpha_{t}$ and/or $\beta_{t}$, or else
that these terms are small if one of the parameters $|\Lambda|$ or $\rho$ is large. 

In addition to the arguments we have used in our proof of Lemma \ref{lem:alpha}, we exploit the following facts:
\begin{itemize}
  \item The crucial cancellation
\[p_{1}^{\varphi_{t}}\nabla W(x-y_{1})p_{1}^{\varphi_{t}}-\nabla W*|\varphi_{t}|^{2}(x)p_{1}^{\varphi_{t}}=0.\] 
which will take place in (\ref{eq:convolution}) below. This cancellation determines the structure of the macroscopic equations of motion.
  \item 
  The uniform spreading of the wave function $\phi_{t}^{\mathrm{ref}}$, which is controlled by the propagation estimate (ii) in Lemma \ref{lem:proagation}. With its help one can show that the convolution
  \[\nabla W*|\phi_{t}^{(\mathrm{ref})}|^{2}(x)\]
  is small for large $|\Lambda|$.
  \item The propagation estimates on the wave function $\epsilon_{t}$ of the gas excitations w.r.t. $\phi^{(\mathrm{ref})}_{t}$, which are provided by Lemma \ref{lem:ref_and_excitation}.
\end{itemize}

\begin{proof}[\textbf{Proof of \lemref{beta}}]
Recall that $t\mapsto\beta_{t}$ is smooth, so that
\begin{eqnarray}
\frac{d}{dt}\beta_{t} & \leq & \left|\frac{d}{dt}\left(X_{t}-\left\langle x\right\rangle _{t}\right)\right|\label{eq:beta_1}\\
 &  & +\left|\frac{d^{2}}{dt^{2}}\left(X_{t}-\left\langle x\right\rangle _{t}\right)\right|\label{eq:beta_2}\\
 &  & +\left|\frac{d}{dt}\left\langle \left(\phi_{t}^{(\mathrm{ref})}+\epsilon_{t}\right)-\Lambda^{1/2}\varphi_{t},\left(\phi_{t}^{(\mathrm{ref})}+\epsilon_{t}\right)-\Lambda^{1/2}\varphi_{t}\right\rangle \right|^{1/2}.\label{eq:beta_3}
\end{eqnarray}
We shall estimate the terms on the right-hand side individually.\\
\\
\noindent\noun{Term (\ref{eq:beta_1})}: By definition (\ref{eq:beta_def}) of the function $\beta_{t}$
we immediately get
\[
(\ref{eq:beta_1})=\left|\frac{d}{dt}\left(X_{t}-\left\langle x\right\rangle _{t}\right)\right|\leq\beta_{t}.
\]
\noindent\noun{Term (\ref{eq:beta_2})}: Using the equations of motion
(\ref{eq:macro_tracer}) and (\ref{eq:ehrenfest}), we find
\begin{eqnarray*}
(\ref{eq:beta_2}) & = & \left|\frac{d^{2}}{dt^{2}}\left(X_{t}-\left\langle x\right\rangle _{t}\right)\right|\\
 & \leq & 2\left|\left\langle \nabla V(X_{t})-\nabla V(x)+\nabla W*|\epsilon_{t}|^{2}(X_{t})+2\Re\nabla W*\left(\overline{\phi_{t}^{(\mathrm{ref})}}\epsilon_{t}\right)(X_{t})-\frac{1}{\rho}\nabla\sum_{k=1}^{N}W(x-y_{k})\right\rangle _{t}\right|.
\end{eqnarray*}
Using the expansion
\[
\nabla V(x)=\nabla V(X_{t})+R^{\nabla V}(x,X_{t})\,(x-X_{t}),
\]
where $R^{\nabla V}$ denotes Taylor's remainder, we obtain
\begin{eqnarray}
\left|\left\langle \nabla V(X_{t})-\nabla V(x)\right\rangle _{t}\right| & \leq & \left|\left\langle R^{\nabla V}(x,X_{t})\cdot\left(x-\left\langle x\right\rangle _{t}+\left\langle x\right\rangle _{t}-X_{t}\right)\right\rangle _{t}\right|\nonumber \\
 & \leq & \left\Vert R^{\nabla V}\right\Vert _{\infty}\left\Vert \left(x-\left\langle x\right\rangle _{t}\right)\Psi_{t}\right\Vert _{2}+\left\Vert R^{\nabla V}\right\Vert _{\infty}\left|\left\langle x\right\rangle _{t}-X_{t}\right|\nonumber \\
 & \leq & C\sqrt{\alpha_{t}}+C\beta_{t},\label{eq:beta_2_1}
\end{eqnarray}
where we have used (\ref{eq:alpha_def}) and (\ref{eq:beta_def}). Furthermore,
\begin{align}
 & \left|\left\langle \nabla W*|\epsilon_{t}|^{2}(X_{t})+2\Re\nabla W*\left(\overline{\phi_{t}^{(\mathrm{ref})}}\epsilon_{t}\right)(X_{t})-\frac{1}{\rho}\nabla\sum_{k=1}^{N}W(x-y_{k})\right\rangle _{t}\right|\label{eq:gradW_1}\\
= & \left|\left\langle \nabla W*|\phi_{t}^{(\mathrm{ref})}+\epsilon_{t}|^{2}(X_{t})-\nabla W*|\phi_{t}^{(\mathrm{ref})}|^{2}(X_{t})-\left|\Lambda\right|\nabla W(x-y_{1})\right\rangle _{t}\right|\label{eq:gradW_2}\\
\leq & \left|\left\langle \nabla W*|\phi_{t}^{(\mathrm{ref})}+\epsilon_{t}|^{2}(X_{t})-\left|\Lambda\right|\nabla W*|\varphi_{t}|^{2}(X_{t})\right\rangle _{t}\right|\label{eq:phi-varphi}\\
 & +\left|\Lambda\right|\left|\left\langle \nabla W*|\varphi_{t}|^{2}(X_{t})-\nabla W*|\varphi_{t}|^{2}(x)\right\rangle _{t}\right|\label{eq:X-mean_x}\\
 & +\left|\Lambda\right|\left|\left\langle \nabla W*|\varphi_{t}|^{2}(x)-\nabla W(x-y_{1})\right\rangle _{t}\right|\label{eq:convolution}\\
 & +\left|\nabla W*|\phi_{t}^{(\mathrm{ref})}|^{2}(X_{t})\right|.\label{eq:gradient}
\end{align}
All these terms are estimated below. In the step
from (\ref{eq:gradW_1}) to (\ref{eq:gradW_2}), we have used the identity
\[
|\phi_{t}^{(\mathrm{ref})}+\epsilon_{t}|^{2}-|\phi_{t}^{(\mathrm{ref})}|^{2}=|\epsilon_{t}|^{2}+2\Re\overline{\phi_{t}^{(\mathrm{ref})}}\epsilon_{t},
\]
and symmetry in the gas degrees of freedom to replace
\[
\frac{1}{\rho}\nabla\sum_{k=1}^{N}W(x-y_{k})
\]
by
\[
\left|\Lambda\right|\nabla W(x-y_{1}).
\]
To estimate the term (\ref{eq:phi-varphi}), we use the identity
\[
|\phi_{t}^{(\mathrm{ref})}+\epsilon_{t}|^{2}-\left|\Lambda\right||\varphi_{t}|^{2}=\Re\left[\left(\phi_{t}^{(\mathrm{ref})}+\epsilon_{t}+\left|\Lambda\right|^{1/2}\varphi_{t}\right)\overline{\left(\phi_{t}^{(\mathrm{ref})}+\epsilon_{t}-\left|\Lambda\right|^{1/2}\varphi_{t}\right)}\right],
\]
which, together with definition (\ref{eq:beta_def}) and \lemref{proagation},
implies that
\begin{eqnarray}
(\ref{eq:phi-varphi}) & = & \left|\left\langle \int dz\,\nabla W(X_{t}-z)\Re\left[\left(\phi_{t}^{(\mathrm{ref})}(z)+\epsilon_{t}(z)+\left|\Lambda\right|^{1/2}\varphi_{t}(z)\right)\overline{\left(\phi_{t}^{(\mathrm{ref})}(z)+\epsilon_{t}(z)-\left|\Lambda\right|^{1/2}\varphi_{t}(z)\right)}\right]\right\rangle _{t}\right|\nonumber \\
 & \leq & \left\Vert \nabla W(X_{t}-\cdot)\Re\left(\phi_{t}^{(\mathrm{ref})}(\cdot)+\epsilon_{t}(\cdot)-\left|\Lambda\right|^{1/2}\varphi_{t}(\cdot)+2\left|\Lambda\right|^{1/2}\varphi_{t}(\cdot)\right)\right\Vert _{2}\times\label{eq:phi-varphi-1}\\
 & &\qquad \times\left\Vert \phi_{t}^{(\mathrm{ref})}+\epsilon_{t}-\left|\Lambda\right|^{1/2}\varphi_{t}\right\Vert _{2}\nonumber\\
 & \leq & C\left(\left\Vert \nabla W\right\Vert _{\infty}+\left\Vert \nabla W\right\Vert _{2}\right)\times\nonumber \\
 &  & \qquad \times\left(\left\Vert \phi_{t}^{(\mathrm{ref})}+\epsilon_{t}-\left|\Lambda\right|^{1/2}\varphi_{t}\right\Vert _{2}+2\left|\Lambda\right|^{1/2}\left\Vert e^{i\Delta t}\phi^{(0)}\right\Vert _{\infty}+2\left|\Lambda\right|^{1/2}\left\Vert \varphi_{t}-e^{i\Delta t}\phi^{(0)}\right\Vert _{2}\right)\times\label{eq:phi-varphi-2}\\
 &  & \qquad\qquad \times\left\Vert \phi_{t}^{(\mathrm{ref})}+\epsilon_{t}-\left|\Lambda\right|^{1/2}\varphi_{t}\right\Vert _{2}\nonumber \\
 & \leq & C\left(\beta_{t}^{2}+C_{\mathrm{prop}}(t)\beta_{t}\right),\label{eq:beta_2_2}
\end{eqnarray}
where, in the step from (\ref{eq:phi-varphi-1}) to (\ref{eq:phi-varphi-2}),
we have used the identity
\[
\varphi_{t}=e^{i\Delta t}\phi^{(0)}+\varphi_{t}-e^{i\Delta t}\phi^{(0)}.
\]
Next, by expanding $\nabla W$ according to
\[
\nabla W(x-y)=\nabla W(X_{t}-y)+R^{\nabla W}(x,y,X_{t})\,(X_{t}-x),
\]
one gets
\begin{eqnarray}
(\ref{eq:X-mean_x}) & = & \left|\Lambda\right|\left|\left\langle \int dy\,\left[\nabla W(X_{t}-y)-\nabla W(x-y)\right]|\varphi_{t}|^{2}(y)\right\rangle _{t}\right|\nonumber \\
 & = & \left|\Lambda\right|\left|\left\langle \int dy\, R^{\nabla W}(x,y,X_{t})|\varphi_{t}|^{2}(y)\,(X_{t}-\left\langle x\right\rangle _{t}+\left\langle x\right\rangle _{t}-x)\right\rangle _{t}\right|\nonumber \\
 & \leq & \left|\Lambda\right|\left|\left\langle \int dy\, R^{\nabla W}(x,y,X_{t})|\varphi_{t}|^{2}(y)\,(X_{t}-\left\langle x\right\rangle _{t})\right\rangle _{t}\right|+\left|\Lambda\right|\left|\left\langle \int dy\, R^{\nabla W}(x,y,X_{t})|\varphi_{t}|^{2}(y)\,(\left\langle x\right\rangle _{t}-x)\right\rangle _{t}\right|\nonumber \\
 & \leq & \left|\Lambda\right|\sup_{x\in\mathbb{R}^{3}}\left\Vert \int dy\, R^{\nabla W}(x,y,X_{t})|\varphi_{t}|^{2}(y)\right\Vert \left[\left|X_{t}-\left\langle x\right\rangle _{t}\right|+\left\Vert (\left\langle x\right\rangle _{t}-x)\Psi_{t}\right\Vert _{2}\right]\nonumber \\
 & \leq & C\left|\Lambda\right|\sup_{x,z\in\mathbb{R}^{3}}\left(\left\Vert R^{\nabla W}(x,\cdot,z)\right\Vert _{\infty}+\left\Vert R^{\nabla W}(x,\cdot,z)\right\Vert _{1}\right)\left(\left\Vert e^{i\Delta t}\phi^{(0)}\right\Vert _{\infty}^{2}+\left\Vert \varphi_{t}-e^{i\Delta t}\phi^{(0)}\right\Vert _{2}^{2}\right)\left(\beta_{t}+\sqrt{\alpha_{t}}\right)\nonumber \\
 & \leq & C\left|\Lambda\right|\, C_{\mathrm{prop}}(t)^{2}\left|\Lambda\right|^{-1}\left(\beta_{t}+\sqrt{\alpha_{t}}\right)\nonumber \\
 & \leq & C\, C_{\mathrm{prop}}(t)^{2}\left(\beta_{t}+\sqrt{\alpha_{t}}\right),\label{eq:beta_2_3}
\end{eqnarray}
where we have used definitions (\ref{eq:alpha_def}) and (\ref{eq:beta_def}), 
as well as \lemref{proagation}.\\

Moreover, by inserting the identity $\mathbbm{1}=p_{1}^{\varphi_{t}}+q_{1}^{\varphi_{t}}$
and noting that
\[
p_{1}^{\varphi_{t}}\nabla W(x-y_{1})p_{1}^{\varphi_{t}}=\nabla W*|\varphi_{t}|^{2}(x)p_{1}^{\varphi_{t}},
\]
we find that
\begin{eqnarray}
(\ref{eq:convolution}) & = & \left|\Lambda\right|\left|\left\langle (p_{1}^{\varphi_{t}}+q_{1}^{\varphi_{t}})\left[\nabla W*|\varphi_{t}|^{2}(x)-\nabla W(x-y_{1})\right](p_{1}^{\varphi_{t}}+q_{1}^{\varphi_{t}})\right\rangle _{t}\right|\nonumber \\
 & \leq & 2\left|\Lambda\right|\left|\left\langle q_{1}^{\varphi_{t}}\left[\nabla W*|\varphi_{t}|^{2}(x)-\nabla W(x-y_{1})\right]p_{1}^{\varphi_{t}}\right\rangle _{t}\right|\nonumber \\
 &  & +\left|\Lambda\right|\left|\left\langle q_{1}^{\varphi_{t}}\left[\nabla W*|\varphi_{t}|^{2}(x)-\nabla W(x-y_{1})\right]q_{1}^{\varphi_{t}}\right\rangle _{t}\right|\nonumber \\
 & \leq & 2\left|\Lambda\right|\left\Vert q_{1}^{\varphi_{t}}\Psi_{t}\right\Vert _{2}\left(\left\Vert \nabla W*|\varphi_{t}|^{2}(x)\right\Vert +\left\Vert \nabla W(x-y_{1})p_{1}^{\varphi_{t}}\right\Vert \right)\nonumber \\
 &  & +2\left|\Lambda\right|\left\Vert q_{1}^{\varphi_{t}}\Psi_{t}\right\Vert _{2}\left\Vert \nabla W\right\Vert _{\infty}\left\Vert q_{1}^{\varphi_{t}}\Psi_{t}\right\Vert _{2}\nonumber \\
 & \leq & C\left|\Lambda\right|\left|\Lambda\right|^{-1/2}\sqrt{\alpha_{t}}\left(\left\Vert \nabla W\right\Vert _{\infty}+\left\Vert \nabla W\right\Vert _{1}\right)\left(\left\Vert e^{i\Delta t}\phi^{(0)}\right\Vert _{\infty}+\left\Vert \varphi_{t}-e^{i\Delta t}\phi^{(0)}\right\Vert _{2}\right)\nonumber \\
 &  & +2\left|\Lambda\right|\left|\Lambda\right|^{-1/2}\sqrt{\alpha_{t}}\left\Vert \nabla W\right\Vert _{\infty}\left|\Lambda\right|^{-1/2}\sqrt{\alpha_{t}}\nonumber \\
 & \leq & C\, C_{\mathrm{prop}}(t)\sqrt{\alpha_{t}}+C\alpha_{t}.\label{eq:beta_2_4}
\end{eqnarray}

Finally, by integrating by parts, one arrives at the estimate
\begin{eqnarray*}
(\ref{eq:gradient}) & = & \left|\nabla W*|\phi_{t}^{(\mathrm{ref})}|^{2}(X_{t})\right|\\
 & \leq & 2\left\Vert W\right\Vert _{1}\left\Vert \phi_{t}^{(\mathrm{ref})}\right\Vert _{\infty}\left\Vert \nabla\phi_{t}^{(\mathrm{ref})}\right\Vert _{\infty},
\end{eqnarray*}
which, with the help of \lemref{proagation}, yields
\begin{equation}
(\ref{eq:gradient})\leq2C_{\mathrm{ref}}^{2}\left\Vert W\right\Vert _{1}\left|\Lambda\right|^{-1/3}.\label{eq:beta_2_5}
\end{equation}

The estimates (\ref{eq:beta_2_1}), (\ref{eq:beta_2_2}), (\ref{eq:beta_2_3}),
(\ref{eq:beta_2_4}), and (\ref{eq:beta_2_5}) imply the existence
of a function $C_{\mathrm{vel}}\in\mathtt{Bounds}$ such that
\begin{equation}
(\ref{eq:beta_2})\leq C_{\mathrm{vel}}(t)\left(\beta_{t}+\beta_{t}^{2}+\sqrt{\alpha_{t}}+\alpha_{t}+\left|\Lambda\right|^{-1/3}\right).\label{eq:beta_2_final}
\end{equation}
\noindent\noun{Term (\ref{eq:beta_3})}: With the help of the equations
of motion (\ref{eq:macro_gas}) and (\ref{eq:intermediate}) we find that
\begin{eqnarray}
(\ref{eq:beta_3}) & = & \left|\frac{d}{dt}\left\langle \left(\phi_{t}^{(\mathrm{ref})}+\epsilon_{t}\right)-\Lambda^{1/2}\varphi_{t},\left(\phi_{t}^{(\mathrm{ref})}+\epsilon_{t}\right)-\Lambda^{1/2}\varphi_{t}\right\rangle \right|^{1/2}\nonumber \\
 & \leq & \left|2\Im\left\langle \left(-\Delta_{y}+W(X_{t}-\cdot)\right)\left(\phi_{t}^{(\mathrm{ref})}+\epsilon_{t}\right)-\left(-\Delta_{y}+W(\left\langle x\right\rangle _{t}-\cdot)\right)\Lambda^{1/2}\varphi_{t},\left(\phi_{t}^{(\mathrm{ref})}+\epsilon_{t}\right)-\Lambda^{1/2}\varphi_{t}\right\rangle \right|^{1/2}\nonumber \\
 & = & \bigg|2\Im\left\langle \left(-\Delta_{y}+W(X_{t}-\cdot)\right)\left(\phi_{t}^{(\mathrm{ref})}+\epsilon_{t}-\Lambda^{1/2}\varphi_{t}\right),\phi_{t}^{(\mathrm{ref})}+\epsilon_{t}-\Lambda^{1/2}\varphi_{t}\right\rangle \label{eq:zees_is_sero}\\
 &  & +2\Im\left\langle \left(W(X_{t}-\cdot)-W(\left\langle x\right\rangle _{t}-\cdot)\right)\Lambda^{1/2}\varphi_{t},\left(\phi_{t}^{(\mathrm{ref})}+\epsilon_{t}\right)-\Lambda^{1/2}\varphi_{t}\right\rangle \bigg|^{1/2}.\nonumber 
\end{eqnarray}
Because the operator 
\[
-\Delta_{y}+W(X_{t}-y)
\]
is self-adjoint, the term in (\ref{eq:zees_is_sero}) is zero, so that
\begin{eqnarray}
(\ref{eq:beta_3}) & \leq & \left|2\Im\left\langle \left(W(X_{t}-\cdot)-W(\left\langle x\right\rangle _{t}-\cdot)\right)\Lambda^{1/2}\varphi_{t},\phi_{t}^{(\mathrm{ref})}+\epsilon_{t}-\Lambda^{1/2}\varphi_{t}\right\rangle \right|^{1/2}\nonumber \\
 & \leq & 2\left\Vert \left(W(X_{t}-\cdot)-W(\left\langle x\right\rangle _{t}-\cdot)\right)\Lambda^{1/2}\varphi_{t}\right\Vert _{2}^{1/2}\left\Vert \phi_{t}^{(\mathrm{ref})}+\epsilon_{t}-\Lambda^{1/2}\varphi_{t}\right\Vert _{2}^{1/2}\nonumber \\
 & \leq & 2\left|\Lambda\right|^{1/2}\left\Vert \left(W(X_{t}-\cdot)-W(\left\langle x\right\rangle _{t}-\cdot)\right)\varphi_{t}\right\Vert _{2}+2\beta_{t}.\label{eq:beta_3_1}
\end{eqnarray}
We expand $W$ according to
\[
W(\left\langle x\right\rangle _{t}-y)=W(X_{t}-y)+R^{W}(\left\langle x\right\rangle _{t},y,X_{t})\,(\left\langle x\right\rangle _{t}-X_{t}),
\]
and use estimate (\ref{eq:beta_3})
to obtain
\begin{eqnarray}
(\ref{eq:beta_3_1}) & = & 2\left|\Lambda\right|^{1/2}\left\Vert R^{W}(\left\langle x\right\rangle _{t},\cdot,X_{t})\varphi_{t}\,(X_{t}-\left\langle x\right\rangle _{t})\right\Vert _{2}+2\beta_{t}\nonumber \\
 & \leq & C\left|\Lambda\right|^{1/2}\beta_{t}\sup_{x,z\in\mathbb{R}^{3}}\left(\left\Vert R^{W}(x,\cdot,z)\right\Vert _{\infty}+\left\Vert R^{W}(x,\cdot,z)\right\Vert _{2}\right)\left(\left\Vert e^{i\Delta t}\phi^{(0)}\right\Vert _{\infty}+\left\Vert \varphi_{t}-e^{i\Delta t}\phi^{(0)}\right\Vert _{2}\right)+2\beta_{t}\nonumber \\
 & \leq & C(1+C_{\mathrm{prop}}(t))\beta_{t},\label{eq:beta_3_final}
\end{eqnarray}
where, once again, we have used (\ref{eq:beta_def}) and \lemref{proagation}.\\

Given estimates (\ref{eq:beta_2_final}) and (\ref{eq:beta_3_final}),
one infers that
\[
\frac{d}{dt}\beta_{t}\leq C_{\beta}^{(1)}(t)\left(\beta_{t}+\beta_{t}^{2}\right)+C_{\beta}^{(2)}(t)\left(\sqrt{\alpha_{t}}+\alpha_{t}+\left|\Lambda\right|^{-1/3}\right),
\]
for $C_{\beta}^{(1)},C_{\beta}^{(2)}\in\mathtt{Bounds}$. This concludes the proof of Lemma \ref{lem:beta}.\end{proof}
\begin{rem}\label{rem:many-tracer}
Note that our analysis easily generalizes to systems of $M > 1$
tracer particles with a microscopic Hamiltonian of, for example, the following form
\begin{equation}
H:=-\sum_{k=1}^{M}\left(\frac{\Delta_{x_{k}}}{2\rho}+\rho V(x_{k})+\sum_{j=1}^{N}W(x_{k}-y_{j})\right)+\sum_{1\leq j<k\leq M}\rho\, I(x_{k}-x_{j})- \sum_{k=1}^{N} \Delta_{y_k}, \label{eq:many_tracer_micro_H}
\end{equation}
where we denote by $x_{1},x_{2},\ldots,x_{M}$ and $p_{1},p_{2},\ldots,p_{M}$
the positions and momenta of the $M$ tracer particles,
and  by $I$ a regular pair potential. To prove a result analogous
 to \thmref{main}, comparing the microscopic dynamics generated
by (\ref{eq:many_tracer_micro_H}) to the macroscopic dynamics,
\begin{eqnarray}
i\frac{d}{dt}\epsilon_{t}(y) & = & \left(-\Delta_{y}+\sum_{k=1}^{M}W\left(X_{k,t}-y\right)\right)\epsilon_{t}(y)+\sum_{k=1}^{M}W(X_{k,t}-y)\phi_{t}^{(\mathrm{ref})}(y),\label{eq:macro_gas-1}\\
\frac{d^{2}X_{k,t}}{dt^{2}} & = & -\nabla V(X_{k,t})-\sum_{j\neq k}\nabla I(X_{k,t}-X_{j,t})-\nabla W*|\epsilon_{t}|^{2}(X_{k,t})-2\Re\nabla W*\left(\overline{\phi_{t}^{(\mathrm{ref})}}\epsilon_{t}\right)(X_{k,t}),\label{eq:macro_tracer-1}
\end{eqnarray}
for all $1\leq k\leq M$, the following natural adaptations are
needed:
\begin{enumerate}
\item As in \defref{initial_values}, we need to assume the initial wave
functions 
\[
\Psi^{(0)}(x_{1},x_{2},\ldots,x_{M},y_{1},y_{2},\ldots,y_{N})
\]
to be given by a product of a localized wave packet for the tracer particles
-- compare to $\chi^{(0)}$ in (\ref{eq:initial_spread}) -- and a product wave function $\prod_{k=1}^{N}\phi^{(0)}(y_{k})$ for the gas particles.
\item The appropriate intermediate dynamics of the effective wave function
of a gas particle (\ref{eq:intermediate}) is of the form 
\[
i\partial_{t}\varphi_{t}(y)=\left(-\Delta_{y}+\sum_{k=1}^{M}W\left(\left\langle x_{k}\right\rangle _{t}-y\right)\right)\varphi_{t}(y).
\]

\item The quantity $\alpha_{t}$ defined in (\ref{eq:alpha_def}) must be
replaced by
\[
\alpha_{t}:=\sqrt{\sum_{k=1}^{M}\left\langle \left(x_{k}-\left\langle x_{k}\right\rangle _{t}\right)^{2}\right\rangle _{t}^{2}+\sum_{k=1}^{M}\left\langle \left(\frac{p_{k}-\left\langle p_{k}\right\rangle _{t}}{\rho}\right)^{2}\right\rangle _{t}^{2}+\left(\left|\Lambda\right|\left\langle q_{1}^{\varphi_{t}}\right\rangle _{t}\right)^{2}+\left(\left|\Lambda\right|^{2}\left\langle q_{1}^{\varphi_{t}}q_{2}^{\varphi_{t}}\right\rangle _{t}\right)^{2}}.
\]

\item The quantity $\beta_{t}$ defined in (\ref{eq:beta_def}) must be
replaced by
\[
\beta_{t}:=\sqrt{\sum_{k=1}^{M}\left(X_{k,t}-\left\langle x_{k}\right\rangle _{t}\right)^{2}+\sum_{k=1}^{M}\left(\frac{d\left(X_{k,t}-\left\langle x_{k}\right\rangle _{t}\right)}{dt}\right)^{2}+\left\Vert \left(\phi_{t}^{(\mathrm{ref})}+\epsilon_{t}\right)-\left|\Lambda\right|^{1/2}\varphi_{t}\right\Vert _{2}^{2}}.
\]

\end{enumerate}
Taking advantage of the commutation relations 
\[
\lbrack x_k,p_j \rbrack = 0,\qquad \forall k \not= j,
\]
\lemref{alpha} and \lemref{beta} can be proven along the same lines
as demonstrated for a single tracer particle. The only new terms
are ones depending on the pair potential $I$, and they are of
the form
\begin{equation}
\left|\left\langle \left\{ \nabla I(x_{k}-x_{j}),\frac{p_{k}-\left\langle p_{k}\right\rangle _{t}}{\rho}\right\} \right\rangle _{t}\right|.\label{eq:int_term}
\end{equation}
Using an expansion of $\nabla I$ of the form 
\[
\nabla I(x_{k}-x_{j})=\nabla I\left(\left\langle x_{k}\right\rangle _{t}-\left\langle x_{j}\right\rangle _{t}\right)+R_{k}^{\nabla I}(x_{k},x_{j},\left\langle x_{k}\right\rangle _{t})\,\left(x_{k}-\left\langle x_{k}\right\rangle _{t}\right)+R_{j}^{\nabla I}(\left\langle x_{k}\right\rangle _{t},x_{j},\left\langle x_{j}\right\rangle _{t})\,\left(x_{j}-\left\langle x_{j}\right\rangle _{t}\right),
\]
where $R_{k}^{\nabla I}$ denote Taylor's remainder terms, one can treat (\ref{eq:int_term}) in the same way as the term (\ref{eq:grad_V})
in the case of only one tracer particle.
\end{rem}
\appendix

\section{Propagation Estimates}
\begin{lem}
\label{lem:proagation}There are $C_{\mathrm{prop}},C_{\mathrm{ref}}\in\mathtt{Bounds}$
such that:\end{lem}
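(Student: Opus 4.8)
The plan is to reduce everything to two elementary facts about the free Schr\"odinger group $e^{i\Delta t}$ on $L^2(\mathbb R^3,\mathbb C)$: in Fourier variables it acts as multiplication by the unimodular symbol $k\mapsto e^{-i|k|^2t}$, so it preserves the norm $\|\widehat{\,\cdot\,}\|_1$; and $\|f\|_\infty\le C\|\widehat f\|_1$ for a universal constant $C$. The reference wave function $\phi_t^{(\mathrm{ref})}$ is an explicit free evolution and will be handled directly by these two facts, while the comparison wave function $\varphi_t$ solving (\ref{eq:intermediate}) will be controlled against $e^{i\Delta t}\phi^{(0)}$ by Duhamel's formula combined with a Gr\"onwall argument. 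Throughout one uses that $\|\varphi_t\|_2=\|\phi^{(0)}\|_2=1$, since (\ref{eq:intermediate}) generates a unitary flow; that $\widehat W\in L^1$ and $W\in L^2$ because $W\in\mathcal C_c^\infty$; and that $t\mapsto\langle x\rangle_t$ is a (smooth, hence locally bounded) $\mathbb R^3$-valued function, so that the translate $W(\langle x\rangle_s-\,\cdot\,)$ is a bounded, compactly supported function whose $L^p$-norms do not depend on $s$. It suffices to consider $t\ge0$; the case $t<0$ is identical by the time-reversal invariance of (\ref{eq:free_reference}) and (\ref{eq:intermediate}).

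I would prove estimate (ii) first. From (\ref{eq:free_reference}) and the initial condition (\ref{eq:initial_reference}) one has $\phi_t^{(\mathrm{ref})}=|\Lambda|^{1/2}e^{i\Delta t}\phi^{(0)}$, hence $\widehat{\phi_t^{(\mathrm{ref})}}(k)=|\Lambda|^{1/2}e^{-i|k|^2t}\widehat{\phi^{(0)}}(k)$ and $\widehat{\nabla\phi_t^{(\mathrm{ref})}}(k)=|\Lambda|^{1/2}e^{-i|k|^2t}\,ik\,\widehat{\phi^{(0)}}(k)$. Taking $L^1$-norms in $k$, using $\|f\|_\infty\le C\|\widehat f\|_1$, and invoking the flatness assumption (\ref{eq:initial_gas_wf}) yields
\[
\|\phi_t^{(\mathrm{ref})}\|_\infty\le C\,|\Lambda|^{1/2}\|\widehat{\phi^{(0)}}\|_1\le C_{\mathrm{ref}},\qquad \|\nabla\phi_t^{(\mathrm{ref})}\|_\infty\le C\,|\Lambda|^{1/2}\|\widehat{\nabla\phi^{(0)}}\|_1\le C_{\mathrm{ref}}\,|\Lambda|^{-1/3},
\]
uniformly in $t$, with $C_{\mathrm{ref}}$ a constant. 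In particular $\|\phi_t^{(\mathrm{ref})}\|_\infty\|\nabla\phi_t^{(\mathrm{ref})}\|_\infty\le C_{\mathrm{ref}}^2|\Lambda|^{-1/3}$, which is exactly what is used in (\ref{eq:beta_2_5}).

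Next I would prove estimate (i). The same Fourier argument gives $\|e^{i\Delta t}\phi^{(0)}\|_\infty\le C\|\widehat{\phi^{(0)}}\|_1\le C|\Lambda|^{-1/2}$. For the difference, Duhamel's formula for (\ref{eq:intermediate}) relative to the free evolution reads $\varphi_t=e^{i\Delta t}\phi^{(0)}-i\int_0^t e^{i\Delta(t-s)}W(\langle x\rangle_s-\,\cdot\,)\varphi_s\,ds$. Passing to Fourier variables, using that $e^{i\Delta(t-s)}$ preserves $\|\widehat{\,\cdot\,}\|_1$ and that translation turns $W$ into a phase times $\widehat W$ (so $\|\widehat{W(\langle x\rangle_s-\,\cdot\,)\varphi_s}\|_1\le C\|\widehat W\|_1\|\widehat{\varphi_s}\|_1$), one obtains
\[
\|\widehat{\varphi_t}\|_1\le\|\widehat{\phi^{(0)}}\|_1+C\|\widehat W\|_1\int_0^t\|\widehat{\varphi_s}\|_1\,ds,
\]
whence Gr\"onwall's Lemma gives $\|\widehat{\varphi_t}\|_1\le C|\Lambda|^{-1/2}e^{C\|\widehat W\|_1 t}$ and therefore $\|\varphi_t\|_\infty\le C|\Lambda|^{-1/2}e^{C\|\widehat W\|_1 t}$. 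Feeding this back into the $L^2$-Duhamel estimate, with $e^{i\Delta(t-s)}$ unitary on $L^2$ and $\|W(\langle x\rangle_s-\,\cdot\,)g\|_2\le\|W\|_2\|g\|_\infty$, gives
\[
\|\varphi_t-e^{i\Delta t}\phi^{(0)}\|_2\le\int_0^t\|W(\langle x\rangle_s-\,\cdot\,)\varphi_s\|_2\,ds\le\|W\|_2\int_0^t\|\varphi_s\|_\infty\,ds\le C|\Lambda|^{-1/2}\int_0^t e^{C\|\widehat W\|_1 s}\,ds.
\]
Adding the two displayed bounds yields $\|e^{i\Delta t}\phi^{(0)}\|_\infty+\|\varphi_t-e^{i\Delta t}\phi^{(0)}\|_2\le C_{\mathrm{prop}}(t)|\Lambda|^{-1/2}$ for a continuous, increasing $C_{\mathrm{prop}}$, which is estimate (i).

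The only genuinely delicate point is the closure of the estimate on $\varphi_t$: the Duhamel term $\|W(\langle x\rangle_s-\,\cdot\,)\varphi_s\|_2$ cannot be bounded by $\|W\|_\infty\|\varphi_s\|_2=\|W\|_\infty$, which is not small, so one must exploit that $\varphi_s$ stays \emph{flat}, of sup-norm amplitude $O(|\Lambda|^{-1/2})$; this is what forces the detour through the Fourier norm $\|\widehat{\varphi_s}\|_1$ and its separate Gr\"onwall estimate, and it is here that the finiteness of $\|\widehat W\|_1$ (guaranteed by $W\in\mathcal C_c^\infty$) enters. A subsidiary technical issue is the well-posedness and regularity of $t\mapsto\langle x\rangle_t$, part of the standing assumptions; it is used only qualitatively, since all of the bounds above are uniform in the location $\langle x\rangle_s$ of the translate.
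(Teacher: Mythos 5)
Your proof of part (ii) and of the bound $\left\Vert e^{i\Delta t}\phi^{(0)}\right\Vert _{\infty}\leq C\left|\Lambda\right|^{-1/2}$ coincides with the paper's. For the $L^{2}$ difference estimate in part (i), however, you take a genuinely different route. The paper splits the integrand of the Duhamel formula as $\varphi_{s}=\left(\varphi_{s}-e^{i\Delta s}\phi^{(0)}\right)+e^{i\Delta s}\phi^{(0)}$, bounds the first piece by $\left\Vert W\right\Vert _{\infty}\left\Vert \varphi_{s}-e^{i\Delta s}\phi^{(0)}\right\Vert _{2}$ (which feeds the Gr\"onwall iteration) and the second by $\left\Vert W\right\Vert _{2}\left\Vert e^{i\Delta s}\phi^{(0)}\right\Vert _{\infty}=O(|\Lambda|^{-1/2})$ (the small source term), and then applies Gr\"onwall directly to $\left\Vert \varphi_{t}-e^{i\Delta t}\phi^{(0)}\right\Vert _{2}$. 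You instead run a preliminary Gr\"onwall argument on $\Vert\widehat{\varphi_{t}}\Vert_{1}$, using that $e^{i\Delta(t-s)}$ preserves this norm and that multiplication by the translated $W$ is a convolution in Fourier space controlled by $\Vert\widehat{W}\Vert_{1}$, to conclude $\left\Vert \varphi_{t}\right\Vert _{\infty}\leq C(t)\left|\Lambda\right|^{-1/2}$, and only then bound the Duhamel integral in one step via $\left\Vert W\right\Vert _{2}\left\Vert \varphi_{s}\right\Vert _{\infty}$. Both arguments are correct. Your version buys a strictly stronger conclusion --- a uniform flatness bound on $\varphi_{t}$ itself, rather than on $e^{i\Delta t}\phi^{(0)}$ plus an $L^{2}$ remainder --- at the cost of two extra technical inputs: the a priori finiteness and continuity of $s\mapsto\Vert\widehat{\varphi_{s}}\Vert_{1}$ needed to legitimize Gr\"onwall (available here since $\varphi_{t}$ stays in $H^{2}$, but worth stating), and the algebra property of the Fourier--$L^{1}$ norm. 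One small correction: your closing remark that the detour through $\Vert\widehat{\varphi_{s}}\Vert_{1}$ is \emph{forced} because $\left\Vert W\right\Vert _{\infty}\left\Vert \varphi_{s}\right\Vert _{2}$ is not small is an overstatement --- the paper's splitting shows that the crude $\left\Vert W\right\Vert _{\infty}$ bound is harmless when applied only to the difference $\varphi_{s}-e^{i\Delta s}\phi^{(0)}$, since there it merely contributes to the Gr\"onwall constant.
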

\begin{enumerate}[label=(\roman*)]
\item The solution $t\mapsto\varphi_{t}$ to (\ref{eq:intermediate}) with
initial value (\ref{eq:intermediate initial value}) fulfills
\[
\left\Vert e^{i\Delta t}\phi^{(0)}\right\Vert _{\infty}+\left\Vert \varphi_{t}-e^{i\Delta t}\phi^{(0)}\right\Vert _{2}\leq C_{\mathrm{prop}}(t)\left|\Lambda\right|^{-1/2}
\]
for all times $t\geq0$.
\item The solution $t\mapsto\phi_{t}^{(\mathrm{ref})}$ to (\ref{eq:free_reference})
with initial value (\ref{eq:initial_reference}) fulfills
\begin{equation}
\left\Vert \phi_{t}^{(\mathrm{ref})}\right\Vert _{\infty}\leq C_{\mathrm{ref}},\qquad\left\Vert \nabla\phi_{t}^{(\mathrm{ref})}\right\Vert _{\infty}\leq C_{\mathrm{ref}}\left|\Lambda\right|^{-1/3}.\label{eq:phi_ref_prop_est}
\end{equation}
\end{enumerate}
\begin{proof}
~
\begin{enumerate}[label=(\roman*)]
\item Because of (\ref{eq:intermediate initial value}) and (\ref{eq:initial_gas_wf})
one immediately gets the estimate
\begin{equation}
\left\Vert e^{i\Delta t}\phi^{(0)}\right\Vert _{\infty}\leq\left\Vert \widehat{e^{i\Delta t}\phi^{(0)}}\right\Vert _{1}=\left\Vert \widehat{\phi^{(0)}}\right\Vert _{1}=C\left|\Lambda\right|^{-1/2}\label{eq:free estimate}
\end{equation}
for all $t\in\mathbb{R}$. Moreover, any solution to (\ref{eq:intermediate})
fulfills the integral equation
\[
\varphi_{t}=e^{i\Delta t}\phi^{(0)}-i\int_{0}^{t}ds\, e^{i\Delta(t-s)}\left[W\left(\left\langle x\right\rangle _{s}-\cdot\right)\right]\varphi_{s}.
\]
Hence, we infer the estimates
\begin{eqnarray*}
\left\Vert \varphi_{t}-e^{i\Delta t}\phi^{(0)}\right\Vert _{2} & = & \left\Vert \int_{0}^{t}ds\, e^{i\Delta(t-s)}\left[W\left(\left\langle x\right\rangle _{s}-\cdot\right)\right]\varphi_{s}\right\Vert _{2}\\
 & \leq & \int_{0}^{t}ds\left\Vert \left[W\left(\left\langle x\right\rangle _{s}-\cdot\right)\right]\left(\varphi_{s}-e^{i\Delta s}\phi^{(0)}\right)\right\Vert _{2}\\
 &  & +\int_{0}^{t}ds\left\Vert \left[W\left(\left\langle x\right\rangle _{s}-\cdot\right)\right]e^{i\Delta s}\phi^{(0)}\right\Vert _{2}\\
 & \leq & \left\Vert W\right\Vert _{\infty}\int_{0}^{t}ds\left\Vert \varphi_{s}-e^{i\Delta s}\phi^{(0)}\right\Vert _{2}+t\left\Vert W\right\Vert _{2}\sup_{s\in[0,t]}\left\Vert e^{i\Delta s}\phi^{(0)}\right\Vert _{\infty}.
\end{eqnarray*}
Inequality (\ref{eq:free estimate}) and Grönwall's Lemma ensure the existence of some $C_{\mathrm{prop}}\in\mathtt{Bounds}$ such that
\begin{equation}
\left\Vert \varphi_{t}-e^{i\Delta t}\phi^{(0)}\right\Vert _{2}\leq C_{\mathrm{prop}}(t)\left|\Lambda\right|^{-1/2}.\label{eq:gronwall}
\end{equation}
Estimates (\ref{eq:free estimate}) and (\ref{eq:gronwall}) prove
our claim.
\item Equation (\ref{eq:free_reference}) and (\ref{eq:initial_reference}), 
together with the estimate (\ref{eq:free estimate}), implies that
\[
\left\Vert \phi_{t}^{(\mathrm{ref})}\right\Vert _{\infty}=\left|\Lambda\right|^{1/2}\left\Vert e^{i\Delta t}\phi^{(0)}\right\Vert _{\infty}\leq C.
\]

\end{enumerate}

Similarly, with (\ref{eq:initial_gas_wf}), one finds that
\[
\left\Vert \nabla\phi_{t}^{(\mathrm{ref})}\right\Vert _{\infty}=\left|\Lambda\right|^{1/2}\left\Vert e^{i\Delta t}\nabla\phi^{(0)}\right\Vert _{\infty}\leq\left|\Lambda\right|^{1/2}\left\Vert \widehat{e^{i\Delta t}\nabla\phi^{(0)}}\right\Vert _{1}=\left|\Lambda\right|^{1/2}\left\Vert \widehat{\nabla\phi^{(0)}(k)}\right\Vert _{1}\leq C\left|\Lambda\right|^{-1/3}.
\]

\end{proof}
\begin{lem}
\label{lem:ref_and_excitation}Let $t\mapsto\epsilon_{t}$ be the
solution to (\ref{eq:macro_gas}) with initial data $\epsilon_{t}|_{t=0}=0$.
There are $C_{\epsilon},\widetilde{C}_{\epsilon}\in\mathtt{Bounds}$
such that, for all times $t\geq0$, the following estimates hold:
\begin{enumerate}[label=(\roman*)]
\item $\left\Vert \epsilon_{t}\right\Vert _{2}\leq C_{\epsilon}(t).$
\item $\left\Vert p_{t}^{(\mathrm{ref})}\epsilon_{t}\right\Vert_{2} \leq\frac{\widetilde{C}_{\epsilon}(t)}{\left|\Lambda\right|^{1/2}}$.
\end{enumerate}
\end{lem}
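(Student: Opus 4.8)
The plan is to prove both estimates by elementary Grönwall arguments that use only two facts about the reference wave, both furnished by \lemref{proagation}(ii) and the normalization: $\|\phi_t^{(\mathrm{ref})}\|_\infty\le C_{\mathrm{ref}}$ \emph{uniformly in $|\Lambda|$}, whereas $\|\phi_t^{(\mathrm{ref})}\|_2=|\Lambda|^{1/2}$ for all $t$ (since (\ref{eq:free_reference}) is unitary and $\phi_0^{(\mathrm{ref})}=|\Lambda|^{1/2}\phi^{(0)}$ with $\|\phi^{(0)}\|_2=1$), together with the compact support of $W$. The standing regularity assumptions guarantee that $\epsilon_t$ is smooth and decaying, which justifies all the differentiations and shifts of $\Delta_y$ across inner products below.

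For (i), differentiate $\|\epsilon_t\|_2^2=\langle\epsilon_t,\epsilon_t\rangle$. Since $-\Delta_y+W(X_t-\cdot)$ is self-adjoint, its contribution to $\dot\epsilon_t$ is purely imaginary when paired with $\epsilon_t$ and drops out, leaving $\frac{d}{dt}\|\epsilon_t\|_2^2=2\,\Im\langle\epsilon_t,W(X_t-\cdot)\phi_t^{(\mathrm{ref})}\rangle\le 2\|\epsilon_t\|_2\,\|W(X_t-\cdot)\phi_t^{(\mathrm{ref})}\|_2$, and by compact support of $W$ one has $\|W(X_t-\cdot)\phi_t^{(\mathrm{ref})}\|_2\le\|W\|_2\,\|\phi_t^{(\mathrm{ref})}\|_\infty\le\|W\|_2 C_{\mathrm{ref}}$. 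Passing to the Duhamel integral form of (\ref{eq:macro_gas}) to avoid the singularity of $\frac{d}{dt}\|\epsilon_t\|_2$ at its zeros, one gets $\|\epsilon_t\|_2\le t\,\|W\|_2 C_{\mathrm{ref}}=:C_\epsilon(t)\in\mathtt{Bounds}$.

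For (ii), write $p_t^{(\mathrm{ref})}\epsilon_t=\langle|\Lambda|^{-1/2}\phi_t^{(\mathrm{ref})},\epsilon_t\rangle\,|\Lambda|^{-1/2}\phi_t^{(\mathrm{ref})}$; since $|\Lambda|^{-1/2}\phi_t^{(\mathrm{ref})}$ is a unit vector, $\|p_t^{(\mathrm{ref})}\epsilon_t\|_2=|\Lambda|^{-1/2}|g_t|$ where $g_t:=\langle\phi_t^{(\mathrm{ref})},\epsilon_t\rangle$. The naive bound $|g_t|\le\|\phi_t^{(\mathrm{ref})}\|_2\|\epsilon_t\|_2=|\Lambda|^{1/2}C_\epsilon(t)$ only yields $\|p_t^{(\mathrm{ref})}\epsilon_t\|_2\le C_\epsilon(t)$ and loses the decay, so I track $g_t$ through its own ODE. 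Differentiating and using (\ref{eq:free_reference}) for $\dot\phi_t^{(\mathrm{ref})}$ and (\ref{eq:macro_gas}) for $\dot\epsilon_t$, the two kinetic contributions $\langle i\Delta_y\phi_t^{(\mathrm{ref})},\epsilon_t\rangle$ and $\langle\phi_t^{(\mathrm{ref})},i\Delta_y\epsilon_t\rangle$ cancel by self-adjointness of $\Delta_y$, leaving
\[
\frac{d}{dt}g_t=-i\langle\phi_t^{(\mathrm{ref})},W(X_t-\cdot)\epsilon_t\rangle-i\langle\phi_t^{(\mathrm{ref})},W(X_t-\cdot)\phi_t^{(\mathrm{ref})}\rangle .
\]
Each term is $O(1)$ uniformly in $|\Lambda|$: the first is bounded by $\|W\|_2\|\phi_t^{(\mathrm{ref})}\|_\infty\|\epsilon_t\|_2\le\|W\|_2 C_{\mathrm{ref}}C_\epsilon(t)$, and the second by $\int|W(X_t-y)|\,|\phi_t^{(\mathrm{ref})}(y)|^2\,dy\le\|W\|_1 C_{\mathrm{ref}}^2$. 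Since $g_0=\langle\phi_0^{(\mathrm{ref})},0\rangle=0$, integration gives $|g_t|\le\int_0^t\big(\|W\|_2 C_{\mathrm{ref}}C_\epsilon(s)+\|W\|_1 C_{\mathrm{ref}}^2\big)\,ds=:\widetilde{C}_\epsilon(t)\in\mathtt{Bounds}$, hence $\|p_t^{(\mathrm{ref})}\epsilon_t\|_2\le\widetilde{C}_\epsilon(t)\,|\Lambda|^{-1/2}$.

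The only genuine obstacle is part (ii): the factor $|\Lambda|^{-1/2}$ cannot be extracted from the $L^2$-size of $\epsilon_t$ and must instead come from the smallness of the overlap between the \emph{localized} excitation $\epsilon_t$ (sourced by the compactly supported $W(X_t-\cdot)\phi_t^{(\mathrm{ref})}$) and the \emph{spread-out} reference wave $\phi_t^{(\mathrm{ref})}$. The clean mechanism making this precise is the cancellation of the free Laplacian in $\frac{d}{dt}g_t$, after which every surviving term carries a factor $W$ and is therefore controlled by the $|\Lambda|$-independent quantity $\|\phi_t^{(\mathrm{ref})}\|_\infty$ rather than by $\|\phi_t^{(\mathrm{ref})}\|_2=|\Lambda|^{1/2}$.
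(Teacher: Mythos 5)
Your proof is correct and follows essentially the same route as the paper: part (i) via the Duhamel/unitarity bound $\|\epsilon_t\|_2\le\int_0^t\|W(X_s-\cdot)\phi_s^{(\mathrm{ref})}\|_2\,ds\le t\,\|W\|_2C_{\mathrm{ref}}$, and part (ii) by differentiating the overlap $\langle|\Lambda|^{-1/2}\phi_t^{(\mathrm{ref})},\epsilon_t\rangle$, cancelling the kinetic terms, and bounding the two surviving $W$-terms by $\|W\|_2C_{\mathrm{ref}}C_\epsilon(t)$ and $\|W\|_1C_{\mathrm{ref}}^2$, exactly as in the paper.
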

\begin{proof}
~
\begin{enumerate}[label=(\roman*)]
\item Since the homogeneous part of (\ref{eq:macro_gas}) is self-adjoint
we may infer control over the norm of a solution $t\mapsto\epsilon_{t}$
from the inhomogeneity according to
\begin{eqnarray*}
\left\Vert \epsilon_{t}\right\Vert  & \leq & \int_{0}^{t}\left\Vert W(X_{s}-\cdot)\phi_{s}^{(\mathrm{ref})}\right\Vert _{2}ds\\
 & \leq & t\,\left\Vert W\right\Vert _{2}\sup_{s\in[0,t]}\left\Vert \phi_{s}^{(\mathrm{ref})}\right\Vert _{\infty}\\
 & \leq & t\, C\, C_{\mathrm{ref}}\\
 & =: & C_{\epsilon}(t).
\end{eqnarray*}

\item Using the equations of motion (\ref{eq:free_reference}) and (\ref{eq:macro_gas}),
as well as initial condition (\ref{eq:semi-classical initial values}), a
direct computation yields
\begin{eqnarray*}
\left|\left\langle \frac{\phi_{t}^{(\mathrm{ref})}}{\left|\Lambda\right|^{1/2}},\epsilon_{t}\right\rangle \right| & \leq & \int_{0}^{t}\left|-i\left\langle \frac{\phi_{s}^{(\mathrm{ref})}}{\left|\Lambda\right|^{1/2}},W(X_{s}-\cdot)\epsilon_{s}\right\rangle -i\left\langle \frac{\phi_{s}^{(\mathrm{ref})}}{\left|\Lambda\right|^{1/2}},W(X_{s}-\cdot)\phi_{s}^{(\mathrm{ref})}\right\rangle \right|ds\\
 & \leq & \frac{t}{\left|\Lambda\right|^{1/2}}\sup_{s\in[0,t]}\left\Vert \phi_{s}^{(ref)}\right\Vert _{\infty}\left\Vert W\right\Vert _{2}\sup_{s\in[0,t]}\left\Vert \epsilon_{s}\right\Vert _{2}+\frac{t}{\left|\Lambda\right|^{1/2}}\sup_{s\in[0,t]}\left\Vert \phi_{s}^{(ref)}\right\Vert _{\infty}^{2}\left\Vert W\right\Vert _{1}\\
 & \leq & \frac{t}{\left|\Lambda\right|^{1/2}}C_{\mathrm{ref}}\, C\, C_{\epsilon}(t)+\frac{t}{\left|\Lambda\right|^{1/2}}C_{\mathrm{ref}}^{2}\, C\\
 & =: & \frac{\widetilde{C}_{\epsilon}(t)}{\left|\Lambda\right|^{1/2}},
\end{eqnarray*}
which holds because of (\ref{eq:phi_ref_prop_est}) in \lemref{proagation}.
\end{enumerate}
\end{proof}

\bibliographystyle{alpha}

\vskip.5cm

\noindent \emph{Dirk-Andr\'e Deckert}\\
Department of Mathematics\\
University of California Davis\\
One Shields Avenue, Davis, California 95616, USA\\
\texttt{deckert@math.ucdavis.edu}

\vskip.5cm

\noindent \emph{J\"urg Fr\"ohlich}\\
School of Mathematics\\
The Institute for Advanced Study\\
Princeton, NJ 08540, USA\\
\texttt{juerg@phys.ethz.ch}

\vskip.5cm

\noindent \emph{Peter Pickl}\\
Mathematisches Institut der LMU M\"unchen\\
Theresienstra\ss e 39, 80333 M\"unchen, Germany\\
\texttt{pickl@math.lmu.de}

\vskip.5cm

\noindent \emph{Alessandro Pizzo}\\
Department of Mathematics\\
University of California Davis\\
One Shields Avenue, Davis, California 95616, USA\\
\texttt{pizzo@math.ucdavis.edu}

\end{document}